\documentclass[sn-mathphys,Numbered,12pt]{sn-jnl}% Math and Physical Sciences Reference Style
%%\documentclass[sn-aps]{sn-jnl}% American Physical Society (APS) Reference Style
%%\documentclass[sn-vancouver,Numbered]{sn-jnl}% Vancouver Reference Style
%%\documentclass[sn-apa]{sn-jnl}% APA Reference Style 
%%\documentclass[sn-chicago]{sn-jnl}% Chicago-based Humanities Reference Style
%%\documentclass[default]{sn-jnl}% Default
%%\documentclass[default,iicol]{sn-jnl}% Default with double column layout

%%%% Standard Packages
%%<additional latex packages if required can be included here>

\usepackage{graphicx}%
\usepackage{multirow}%
\usepackage{amsmath,amssymb,amsfonts}%
\usepackage{amsthm}%
\usepackage{mathrsfs}%
\usepackage[title]{appendix}%
\usepackage{xcolor}%
\usepackage{textcomp}%
\usepackage{manyfoot}%
\usepackage{booktabs}%
\usepackage{algorithm}%
\usepackage{algorithmicx}%
\usepackage{algpseudocode}%
\usepackage{listings}%
%%%%

%%%%%=============================================================================%%%%
%%%%  Remarks: This template is provided to aid authors with the preparation
%%%%  of original research articles intended for submission to journals published 
%%%%  by Springer Nature. The guidance has been prepared in partnership with 
%%%%  production teams to conform to Springer Nature technical requirements. 
%%%%  Editorial and presentation requirements differ among journal portfolios and 
%%%%  research disciplines. You may find sections in this template are irrelevant 
%%%%  to your work and are empowered to omit any such section if allowed by the 
%%%%  journal you intend to submit to. The submission guidelines and policies 
%%%%  of the journal take precedence. A detailed User Manual is available in the 
%%%%  template package for technical guidance.
%%%%%=============================================================================%%%%

%\jyear{2021}%

%% as per the requirement new theorem styles can be included as shown below
\theoremstyle{thmstyleone}%
\newtheorem{theorem}{Theorem}%  meant for continuous numbers
%%\newtheorem{theorem}{Theorem}[section]% meant for sectionwise numbers
%% optional argument [theorem] produces theorem numbering sequence instead of independent numbers for Proposition
\newtheorem{proposition}[theorem]{Proposition}% 

\theoremstyle{thmstyletwo}%
\newtheorem{lemma}{Lemma}%
\newtheorem{corollary}{Corollary}%
\newtheorem{example}{Example}%
\newtheorem{remark}{Remark}%
\newtheorem{notation}{Notation}%

\theoremstyle{thmstylethree}%
\newtheorem{definition}{Definition}%
\newtheorem{note}{Note}%

\raggedbottom
%%\unnumbered% uncomment this for unnumbered level heads

 %% To be removed later

\begin{document}
	
	\title[Article Title]{On the distance distributions of single-orbit cyclic subspace codes}
	
	%%=============================================================%%
	%% Prefix	-> \pfx{Dr}
	%% GivenName	-> \fnm{Joergen W.}
	%% Particle	-> \spfx{van der} -> surname prefix
	%% FamilyName	-> \sur{Ploeg}
	%% Suffix	-> \sfx{IV}
	%% NatureName	-> \tanm{Poet Laureate} -> Title after name
	%% Degrees	-> \dgr{MSc, PhD}
	%% \author*[1,2]{\pfx{Dr} \fnm{Joergen W.} \spfx{van der} \sur{Ploeg} \sfx{IV} \tanm{Poet Laureate} 
		%%                 \dgr{MSc, PhD}}\email{iauthor@gmail.com}
	%%=============================================================%%
	
	\author{\fnm{Mahak} }\email{mahak@ma.iitr.ac.in}
	
	\author*{\fnm{Maheshanand} \sur{Bhaintwal}}\email{maheshanand@ma.iitr.ac.in}
	%\equalcont{These authors contributed equally to this work.}
	
	%\author[1,2]{\fnm{Third} \sur{Author}}\email{iiiauthor@gmail.com}
	%\equalcont{These authors contributed equally to this work.}
	
	\affil{\orgdiv{Department of Mathematics}, \orgname{Indian Institute of Technology Roorkee}, \orgaddress{\street{} \city{Roorkee}, \postcode{247667}, \state{Uttarakhand}, \country{India}}}
	
	%\affil[2]{\orgdiv{Department}, \orgname{Organization}, \orgaddress{\street{Street}, \city{City}, \postcode{10587}, \state{State}, \country{Country}}}
	%
	%\affil[3]{\orgdiv{Department}, \orgname{Organization}, \orgaddress{\street{Street}, \city{City}, \postcode{610101}, \state{State}, \country{Country}}}
	
	%%==================================%%
	%% sample for unstructured abstract %%
	%%==================================%%
	
	\abstract{A cyclic subspace code is a union of the orbits of  subspaces contained in it.  In a recent paper, Gluesing-Luerssen et al. (Des. Codes Cryptogr. 89, 447--470, 2021) showed that the study of the distance distribution of a single orbit cyclic subspace code is equivalent to the study of its  intersection distribution. In this paper we have proved that in the orbit of a subspace $U$ of $\mathbb{F}_{q^n}$ that has  the stabilizer $\mathbb{F}_{q^t}^*(t \neq n)$, the number of codeword pairs $(U,\alpha U)$ such that $\dim(U\cap \alpha U)=i$ for any $i,~ 0\leq i < \dim(U)$, is a multiple of $q^t(q^t+1)$, if $\frac{n}{t}$ is an odd number. In the case of even  $\frac{n}{t}$, if  $U$ contains $\frac{q^{2tm}-1}{q^{2t}-1}~ (m\geq 0)$ distinct cyclic shifts of $\mathbb{F}_{q^{2t}}$, then the  number of codeword pairs  $(U,\alpha U)$ with intersection dimension $2tm$ is equal to  $q^t+rq^t(q^t+1)$, for some non-negative integer $r$; and the number of codeword pairs $(U,\alpha U)$ with intersection dimension $i,~(i\neq 2tm)$ is a multiple of $q^t(q^t+1)$. Some examples have been given to illustrate the results presented in the paper.}

	\keywords{Subspace codes, Orbit codes,  Cyclic subspace codes, Distance distribution}
	
	%%\pacs[JEL Classification]{D8, H51}
	
	\pacs[MSC Classification]{11T71, 94B60 }
	
	\maketitle
	
	\section{Introduction}\label{sec1}
	
	Subspace codes were first introduced by K\"{o}tter and Kschischang in 2008 \cite{koetter}. They showed how subspace codes may be used for correcting errors and erasures in random network coding. As a result of their work, an enormous amount of research has been done on subspace codes. In recent years, there has been a growing interest in cyclic subspace codes, which are a class of subspace codes, among researchers. These codes were introduced by Etzion and Vardy \cite{etzion}. One of the motivations for studying cyclic subspace codes is their algebraic structure and efficient encoding and decoding algorithms. Several constructions of cyclic subspace codes have been reported in the literature \cite{ben,otal,chen,zhao}.
	
	A subspace code is a collection of $\mathbb{F}_q$-subspaces of $\mathbb{F}_q^n$. Recall that the extension field $\mathbb{F}_{q^n}$  is isomorphic to $\mathbb{F}_q^n$ when considered as a vector space over $\mathbb{F}_q$. Since $\mathbb{F}_{q^n}$ has more algebraic properties than $\mathbb{F}_q^n$, in the study of cyclic subspace codes we consider the subspaces of $\mathbb{F}_q^n$ in $\mathbb{F}_{q^n}$.  A cyclic subspace code $C$ is a collection of $\mathbb{F}_q$-subspaces of $\mathbb{F}_q^n$ that is invariant under the multiplication by the elements of $\mathbb{F}_{q^n}^*$, i.e., for any $\alpha \in \mathbb{F}_{q^n}^*$ and $U\in C$, the subspace $\alpha U=\{\alpha u\mid u \in U\}$ belongs to $C$. The set $\{\alpha U \mid \alpha \in \mathbb{F}_{q^n}^*\}$ can be seen as the orbit of the subspace $U$ under the action of the group $\mathbb{F}_{q^n}^*$ on the set of subspaces of $\mathbb{F}_q^n$ \cite{traut}.
	Thus a cyclic subspace code $C$ is a union of orbits of subspaces contained in $C$.
	
	%A code that is an orbit of a group $G$ on a metric set $X$, is called an orbit code. If the acting group $G$ is cyclic we call it a cyclic orbit code.%
	A cyclic subspace code that contains only one orbit is called a \emph{single-orbit cyclic subspace code} or simply an \emph{orbit code}. Thus, an orbit code is a constant dimension code. If the cardinality of an orbit code is $(q^n-1)/(q-1)$, which is the maximum cardinality of an orbit, then it is called as a full-length orbit code. A full-length orbit code with  subspace distance $2k-2$ is known as an optimal full-length orbit code, where $k$ is the dimension of the subspaces contained in the orbit code.
	
	The authors in \cite{tamo} proposed a novel approach to study cyclic subspace codes. They demonstrated that constructing a single-orbit cyclic subspace code of maximum size $(q^n-1)/(q-1)$ and minimum distance $2k-2$ is equivalent to constructing a Sidon space of dimension $k$ in $\mathbb{F}_q^n$. A \emph{Sidon space} is a subspace $V$ of $\mathbb{F}_q^n$ such that the product of any two non-zero elements of $V$ has a unique factorization over $V$ up to a constant multiple,  i.e., for any $a,b,c,d\in V\backslash \{0\}$ with the property $ab=cd$, implies that $\{a\mathbb{F}_q, b\mathbb{F}_q\}=\{c\mathbb{F}_q, d\mathbb{F}_q\}$.
	
	The \emph{distance distribution} of a code $C$ refers to a vector whose $i$th entry represents the count of pairs of codewords in $C$ that have distance $i$. In \cite{leh}, the authors showed that finding the distance distribution of an orbit code is equivalent to finding its intersection distribution, i.e., counting the number of codeword pairs whose  intersection achieves a specific dimension. They showed that the distance distribution of an optimal full-length orbit code depends solely on the parameters $q,n$, and $k$, irrespective of the choice of the Sidon space. They further investigated the distance distribution of full-length orbit codes with distance $2k-4$. However, they found that, in this case, the distance distribution also depends on the choice of the subspace.
	
	In this paper, we have studied the distance distribution of single-orbit cyclic subspace codes. We have proved that when $n$ is odd, for any subspace $U$ in $\mathbb{F}_{q^n}$ that generates a full-length orbit, the count of codeword pairs $(U,\alpha U)$ in the orbit of $U$  such that $\dim(U\cap \alpha U)=i$ for any $i,~ 0\leq i < \dim(U)$, is always a multiple of $q(q+1)$. %When $n$ is even, the number of codeword pairs $(U,\alpha U)$ in the orbit of $U$ with a given odd intersection dimension is always a multiple of $q(q + 1)$. For even $n$, the count of codeword pairs $(U,\alpha U)$ with an even intersection dimension depends on $U$. 
	For even $n$, if a subspace $U$ contains $\frac{q^{2m}-1}{q^2-1}~(m\geq0)$ number of distinct cyclic shifts of $\mathbb{F}_{q^2}$, then the number of codeword pairs $(U,\alpha U)$ in the orbit of $U$ with intersection dimension $2m$ is equal to $q+ r q(q+1)$ for some non-negative integer $r$, and the number of codeword pairs $(U,\alpha U)$ with intersection dimension $i$ for any $i,~ 0\leq i<\dim(U)~(i\neq 2m)$, is a multiple of $q(q+1)$. We then generalize the result for a single-orbit cyclic subspace code that does not generate a full-length orbit.

	\section{Preliminaries}
	
	We assume throughout that $q$ is a prime power. 
	Let $\mathbb{F}_q$ be a finite field of size $q$, and let $\mathbb{F}_q^* := \mathbb{F}_q\backslash \{0\}$. It is well known that $\mathbb{F}_q^n$ is a vector space of dimension $n$ over $\mathbb{F}_q$. For any subset $\{x_1,x_2,\ldots,x_r\}$ of $\mathbb{F}_q^n$ the
	subspace of $\mathbb{F}_q^n$ spanned by this set is denoted by $\langle x_1,x_2,\ldots,x_r\rangle_{\mathbb{F}_q}$. The set of all $\mathbb{F}_q$-subspaces of $\mathbb{F}_q^n$ is called the projective space of order $n$ over $\mathbb{F}_q$ and is denoted by $\mathcal{P}_q(n)$. For $0\leq k\leq n$, the set of all $k$-dimensional $\mathbb{F}_q$-subspaces of $\mathbb{F}_q^n$ is called a Grassmanian and it is denoted by $\mathcal{G}_q(n,k)$. Clearly, 
	\[\mathcal{P}_q(n)= \bigcup_{k=0}^{n}\mathcal{G}_q(n,k)~.\]
	The size of $\mathcal{G}_q(n,k)$ is given by the $q$-binomial coefficient ${n \brack k}_q$, i.e., 
	\[ |\mathcal{G}_q(n,k)|=  {n \brack k}_q=  \frac{(q^n-1) (q^n-q)\ldots(q^n-q^{k-1})} {(q^k-1) (q^k-q)\ldots(q^k-q^{k-1})}~.\]
	
	A metric  $d_s$, called the \emph{subspace distance}, is defined on $\mathcal{P}_q(n)$ as follows. For any $U, V \in \mathcal{P}_q(n)$,
	\[d_s(U,V)=\dim(U)+\dim(V)-2 \dim(U\cap V)~.\]

	The projective space $\mathcal{P}_q(n)$ is therefore a metric space with respect to the metric $d_s$. A subspace code $C$ is a non-empty subset of $\mathcal{P}_q(n)$ with the metric $d_s$. If every element in a subspace code $C$ is of the same dimension, i.e., $C\subseteq \mathcal{G}_q(n,k)$ for some positive integer $k\leq n$, then we say that $C$ is a constant dimension subspace code. The minimum distance of a subspace code $C$, denoted by $d_s(C)$, is defined by
	\[d_s(C) = \mbox{min}\{d_s(U,V)\mid U,V \in C,~ U\neq V\}~.\]
	
	%For a subspace code $C\subseteq \mathcal{G}_q(n,k)$, we have 
	%\[d(C)=2k-2~\mbox{max}\{\dim(U\cap V) \mid U,V\in C,~ U\neq V\}~.\]
	
	Let $\mathbb{F}_{q^n}$ denote the extension field of degree $n$ of $\mathbb{F}_q$, and let $\mathbb{F}_{q^n}^*\triangleq \mathbb{F}_{q^n} \backslash \{0\}$. Recall that the extension field $\mathbb{F}_{q^n}$ is a vector space of dimension $n$ over $\mathbb{F}_q$ and is isomorphic to $\mathbb{F}_q^n$ as a vector space over $\mathbb{F}_q$. Therefore we will no longer distinguish between $\mathbb{F}_{q^n}$ and $\mathbb{F}_q^n$ as vector spaces over $\mathbb{F}_q$. For a subspace $U\subseteq \mathbb{F}_{q^n} ~\mbox{and} ~\alpha \in \mathbb{F}_{q^n}^*$, the cyclic  shift of $U$ with respect to $\alpha$ is defined as $\alpha U=\{\alpha u \mid u \in U \}$. Clearly $\alpha U \subseteq \mathbb{F}_{q^n}$. It is easy to see that $\alpha U$ is a vector space over $\mathbb{F}_q$ and its dimension is same as the dimension of $U$ over $\mathbb{F}_q$. In fact, we can define a group action $\mathbb{F}_{q^n}^* \times \mathcal{P}_q(n) \rightarrow \mathcal{P}_q(n)$ of $\mathbb{F}_{q^n}^*$  on $\mathcal{P}_q(n)$ \cite{traut} as 
	\begin{eqnarray*}
		(\alpha , U) &\rightarrow & \alpha U~.
	\end{eqnarray*}

	For any $\mathbb{F}_q$-subspace  $U\subseteq\mathbb{F}_{q^n}$, the \emph{orbit} of $U$, denoted by $\mbox{Orb}(U)$, is defined by
	\[\mbox{Orb}(U)=\{\alpha U \mid \alpha \in  \mathbb{F}_{q^n}^* \}~.\]
	The \emph{stabilizer} of $U$, denoted by $\mbox{Stab}(U)$, is defined by $\mbox{Stab}(U)=\{\alpha \in \mathbb{F}_{q^n}^*\mid \alpha U=U\}$. Clearly, $\mbox{Stab}(U)$ is a subgroup of $\mathbb{F}_{q^n}^*$, and since $aU=U$ for all $a \in \mathbb{F}_q^\ast$, we have $\mathbb{F}_q^\ast \subseteq \mbox{Stab}(U)$. By  \cite[Lemma 3.3]{glue}, $\mbox{Stab}(U) \cup \{0\}$ is a subfield of $\mathbb{F}_{q^n}$, and $U$ is a vector space over $\mbox{Stab}(U)\cup\{0\}$. Thus $\mbox{Stab}(U)\cup \{0\}= \mathbb{F}_{q^t}$ for some $t$ which is a divisor of $\mbox{gcd}(\dim_{\mathbb{F}_q}(U),n)$. For $t=n$, $\mbox{Stab}(U)=\mathbb{F}_{q^n}^*$ and $U=\mathbb{F}_{q^n}$. Thus in this case, $\mbox{Orb}(U)$ contains only one element. So, we always consider $t<n$. By the orbit-stabilizer theorem, for any subspace $U$ of $\mathbb{F}_q^n$, we have
	\[\lvert \mbox{Orb}(U)\rvert = \frac{q^n-1}{\lvert \mbox{Stab}(U)\rvert} = \frac{q^n-1}{q^t-1}~.\]
	A subspace code $C$ is called a \emph{cyclic subspace} code if it is closed under cyclic shifts, i.e., for any $\alpha \in \mathbb{F}_{q^n}^*~\mbox{and}~U\in C,~\alpha U\in C$. Clearly, $\mbox{Orb}(U)$ is a cyclic subspace code of constant dimension. We call $\mbox{Orb}(U)$ a \emph{single-orbit cyclic subspace code} or simply an \emph{orbit code}. In general, a cyclic subspace code is the union of the orbits of the subspaces contained in it.

	%The cardinality of an orbit is given by the following theorem. 
	%\begin{theorem}\cite{otal}\label{thm1.1}
	%	Let $U\in \mathcal{G}_q(n,k)$. Then $\mathbb{F}_{q^t}$ is the largest field such that $U$ is also $\mathbb{F}_{q^t}-$linear if and only if 
	%	\[\lvert{\mbox{Orb}(U)}\rvert= \frac{q^n-1}{q^t-1}~.\]
	%%\end{theorem}
	If $\mbox{Stab}(U)=\mathbb{F}_q^*$, i.e., $\lvert{\mbox{Orb}(U)}\rvert= \frac{q^n-1}{q-1}$, then $\mbox{Orb}(U)$ is called a \emph{full-length orbit code} and we say that $U$ generates a full-length orbit. Otherwise, $\mbox{Orb}(U)$ is a degenerate orbit.
	
	%Hence by Theorem, $\mbox{Stab}(U)\cup\{0\}$ is the largest field such that $U$ is a vector space over $\mbox{Stab}(U)\cup\{0\}$.
	In the next result we show that the dimensions of $U\cap \alpha U$ and $U\cap \alpha^{-1}U$ are same. 
	\begin{lemma}\label{lemma2.1}
		Let  $U$ be a subspace of $\mathbb{F}_{q^n}$ and $\alpha \in \mathbb{F}_{q^n}^*$. Then the dimensions of $U\cap \alpha U$ and $U\cap \alpha^{-1} U$ over $\mathbb{F}_q$ are equal.
		%$\dim(U \cap \alpha U)=\dim(U\cap \alpha^{-1}U)$.  
	\end{lemma}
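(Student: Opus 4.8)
The plan is to exploit the fact that multiplication by any nonzero element of $\mathbb{F}_{q^n}$ is an $\mathbb{F}_q$-linear bijection of $\mathbb{F}_{q^n}$ onto itself, and therefore preserves the $\mathbb{F}_q$-dimension of every subspace and commutes with intersection. The whole statement should reduce to a short set-theoretic identity together with the observation that a linear isomorphism cannot change dimension.

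First I would record the elementary fact that for any $\mathbb{F}_q$-subspaces $A,B$ of $\mathbb{F}_{q^n}$ and any $\beta \in \mathbb{F}_{q^n}^*$, one has $\beta(A\cap B)=\beta A\cap \beta B$; this holds because the map $x\mapsto \beta x$ is injective, so an element lies in both $\beta A$ and $\beta B$ precisely when its (unique) preimage lies in $A\cap B$. Next I would specialize to $\beta=\alpha^{-1}$, $A=U$ and $B=\alpha U$, which gives
\[
\alpha^{-1}(U\cap \alpha U)=\alpha^{-1}U\cap \alpha^{-1}(\alpha U)=\alpha^{-1}U\cap U=U\cap \alpha^{-1}U.
\]
Since the map $x\mapsto \alpha^{-1}x$ is an $\mathbb{F}_q$-linear isomorphism of $\mathbb{F}_{q^n}$, the left-hand side $\alpha^{-1}(U\cap \alpha U)$ has the same $\mathbb{F}_q$-dimension as $U\cap \alpha U$, whereas the right-hand side is literally the subspace $U\cap \alpha^{-1}U$. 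Comparing the two sides and reading off dimensions establishes the equality $\dim_{\mathbb{F}_q}(U\cap \alpha U)=\dim_{\mathbb{F}_q}(U\cap \alpha^{-1}U)$.

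There is essentially no serious obstacle in this argument; the only point requiring a little care is that the displayed equation is an honest equality of \emph{sets}, not merely of dimensions, and this is exactly what the injectivity (indeed bijectivity) of scalar multiplication guarantees. I would expect this lemma to serve later as the tool that pairs the shift $\alpha$ with $\alpha^{-1}$ when counting the codeword pairs $(U,\alpha U)$ of a prescribed intersection dimension, which is presumably where the factors of the form $q^t(q^t+1)$ in the main theorem originate.
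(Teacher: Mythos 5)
Your proof is correct and rests on the same underlying idea as the paper's: multiplication by $\alpha^{-1}$ carries $U\cap\alpha U$ into $U\cap\alpha^{-1}U$ without changing $\mathbb{F}_q$-dimension. The paper packages this as a basis argument yielding $\dim(U\cap\alpha^{-1}U)\ge\dim(U\cap\alpha U)$ followed by the reverse inequality from swapping $\alpha$ and $\alpha^{-1}$, whereas you obtain the cleaner one-line set identity $\alpha^{-1}(U\cap\alpha U)=U\cap\alpha^{-1}U$; the two arguments are essentially equivalent.
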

	\begin{proof}
		Let $\dim(U\cap \alpha U)=t$, and let $\{v_1,v_2,\ldots, v_t\}$ be a basis of $U\cap \alpha U$. It is easy to see that $\{\alpha^{-1}v_1, \alpha^{-1}v_2,\ldots, \alpha^{-1}v_t\}\subseteq U\cap \alpha^{-1}U$. We will show that $\{\alpha^{-1}v_1, \alpha^{-1}v_2,\ldots, \alpha^{-1}v_t\}$ is a linearly independent set over $\mathbb{F}_q$. Let $\lambda_1,\lambda_2,\ldots, \lambda_t$ be scalars in $\mathbb{F}_q$ such that 
		\[\lambda_1\alpha^{-1}v_1+\lambda_2\alpha^{-1}v_2+\cdots +\lambda_t\alpha^{-1}v_t=0~.\]
		Since $\alpha^{-1}\neq 0 $, it follows that $\lambda_1v_1+\lambda_2v_2+\cdots+\lambda_tv_t=0$. As $\{v_1,v_2,\ldots,v_t\}$ is a linearly independent set over $\mathbb{F}_q$, we get $\lambda_i=0~\mbox{for all}~i,~ 1\leq i\leq t$. Thus $\dim(U\cap \alpha^{-1}U)\geq t$.  Now using $\alpha^{-1}$, instead of $\alpha$, in this inequality, we get  $\dim(U\cap (\alpha^{-1})^{-1}U)\geq \dim(U\cap \alpha^{-1}U)$. Hence $\dim(U\cap \alpha U)=\dim(U\cap \alpha^{-1}U)$.
	\end{proof}

	Let $U$ be an $\mathbb{F}_q$-subspace of dimension $k$ in $\mathbb{F}_{q^n}$. Then $\mbox{Orb}(U)\subseteq \mathcal{G}_q(n,k)$. By the definition of subspace distance for any $\alpha, \beta \in \mathbb{F}_{q^n}^*$,  we have 
	\begin{eqnarray}\label{eq1}
		\notag d_s(\alpha U, \beta U)&=&\dim(\alpha U)+\dim(\beta U)-2 \dim(\alpha U \cap \beta U)\\
		&=& 2k - 2 \dim(\alpha U \cap \beta U)~.
	\end{eqnarray}
	By Lemma \ref{lemma2.1}, $\dim(\alpha U\cap\beta U)= \dim(U\cap\alpha^{-1} \beta U)$, so we get
	\[d_s(\alpha U, \beta U)= 2k- 2\dim(U\cap \alpha^{-1}\beta U) ~.\]
	As a result, the distance distribution of $\mbox{Orb}(U)$ can be determined by finding the dimension of the intersection of $U$ with its cyclic shifts.\\
	Therefore,
	\begin{eqnarray*}
		d_s(\mbox{Orb}(U))&=& \mbox{min}\{d_s(\alpha U,\beta U)\mid \alpha ,\beta \in \mathbb{F}_{q^n}^*, \alpha U \neq \beta U \}\\
		&=& 2k- 2~\mbox{max}\{\dim(U\cap\gamma U)\mid \gamma \in \mathbb{F}_{q^n}^*, \gamma U \neq U\}~.
	\end{eqnarray*}
	
	For any subspace code $C ~\mbox{in}~\mathcal{P}_q(n)$, the \emph{complementary code} $C^\perp$ of $C$ is defined by
	$C^{\bot}={\{V^\bot \mid V\in C\}}$, where $V^{\bot}$ is the orthogonal complement of $V.$ We have $\dim(V^\bot)=n-\dim(V)$, and   $d_s(C)=d_s(C^\bot)$ {\cite{koetter}}.
	
	For any subspaces $U,V$ in $\mathcal{P}_q(n),~ d_s(U^{\bot},V^{\bot})=d_s(U,V).$ From this follows that the distance distributions of a subspace code $C$ and its complementary code are the same. As $\dim(V^\bot)=n- \dim(V)$ for any subspace $V$ of $\mathbb{F}_q^n$, without loss of generality, we may consider only the orbits of subspaces of dimension less than or equal to $n/2$.

	\begin{definition}\cite{leh}
		Let $k\leq \frac{n}{2}$ and let $U$ be an $\mathbb{F}_q$-subspace of dimension $k$ in $\mathbb{F}_{q^n}$. For $0\leq i\leq 2k$, we define 
		\[\delta_i=\lvert\{\alpha U\in \mbox{Orb}(U)\mid d_s(U,\alpha U)=i\}\rvert~.\]
		We call $(\delta_0, \delta_1,\ldots,\delta_{2k})$ the distance distribution of $\mbox{Orb}(U)$. From (\ref{eq1}), we see that the distance between codeword pairs is always even. If $d_s(\mbox{Orb}(U))= 2d$, the non-trivial part of the distance distribution is $(\delta_{2d}, \delta_{2d+2},\ldots, \delta_{2k})$. 
	\end{definition}
\begin{note}	
Here, in the distance distribution, the distance is calculated from the reference space $U$. It is easy to see that for a subspace $U$ in $\mathbb{F}_{q^n}$ with the stabilizer $\mathbb{F}_{q^t}^*$, the number of the codeword pairs of type $(\alpha U, \beta U)$, $\alpha,\beta\in \mathbb{F}_{q^n}^*$
such that $d_s(\alpha U , \beta U)=\delta_{2i}$ is equal to  $\left(\frac{q^n-1}{q^t-1}\right)\delta_{2i}$.
\end{note}
	\begin{definition}\cite{leh}
		Let $k\leq n/2$, and let $U$ be an $\mathbb{F}_q$-subspace of dimension $k$ in $\mathbb{F}_{q^n}$. 
		For $i=0,1,\ldots, k-1,$ we define $\lambda_i=\lvert\mathcal{O}_i(U)\rvert$, where 
		\[\mathcal{O}_i(U)=\{\alpha U\in \mbox{Orb(U)} \mid \dim(U\cap \alpha U)=i\}~.\]
	\end{definition}
	That is, $\lambda_i$ is the number of subspaces $\alpha U$ such that $\dim(U\cap \alpha U)=i$. 
	Suppose $d_s(\mbox{Orb}(U))=2k-2l$. Then $l=\mbox{max}\{\dim(U\cap \alpha U)\mid \alpha\in \mathbb{F}_{q^n}^*, \alpha U\neq U\}$. We call $l$ the maximum intersection dimension of $\mbox{Orb}(U)$, and  $(\lambda_0,\lambda_1, \ldots, \lambda_l)$ the \emph{intersection distribution} of $\mbox{Orb}(U)$.

	\section{Intersection distribution}

	In this section we give some results related to the intersection distribution of orbit codes.

	\begin{lemma}\label{lemma3.1}
		Let $U$ be a subspace in $\mathcal{G}_q(n,k)$ such that $U$ generates a full-length orbit. Then for any $\alpha \in \mathbb{F}_{q^n}\backslash \mathbb{F}_q~\mbox{and}~s\in \mathbb{F}_q^*,~ \dim(U\cap (\alpha+s)U) =\dim(U\cap \alpha U)$.
		
	\end{lemma}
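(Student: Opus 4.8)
The plan is to avoid computing either intersection directly and instead to exploit the fact that multiplication by a fixed nonzero element of $\mathbb{F}_{q^n}$ is an $\mathbb{F}_q$-linear automorphism of $\mathbb{F}_{q^n}$, hence dimension-preserving. So it suffices to show that a suitable such multiplication carries $U\cap\alpha U$ onto $U\cap(\alpha+s)U$.

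First I would introduce the auxiliary $\mathbb{F}_q$-subspace $W_\beta:=\{u\in U\mid \beta u\in U\}$ for $\beta\in\mathbb{F}_{q^n}^*$. Because $U$ is an $\mathbb{F}_q$-subspace, $W_\beta$ is an $\mathbb{F}_q$-subspace of $U$; and since $\beta\neq 0$, the map $u\mapsto\beta u$ is an injective $\mathbb{F}_q$-linear map from $W_\beta$ whose image is exactly $U\cap\beta U$ (if $u\in W_\beta$ then $\beta u\in U\cap\beta U$; conversely any $x\in U\cap\beta U$ has the form $x=\beta u$ with $u\in U$ and $\beta u\in U$, so $u\in W_\beta$). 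Consequently $\dim_{\mathbb{F}_q}(U\cap\beta U)=\dim_{\mathbb{F}_q}W_\beta$, which reduces the lemma to proving the set equality $W_\alpha=W_{\alpha+s}$.

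The key step is this equality, which follows from the absorption of $U$ under $\mathbb{F}_q$-scalars. For $u\in U$ we have $su\in U$ because $s\in\mathbb{F}_q^*$; therefore $(\alpha+s)u=\alpha u+su$ lies in $U$ if and only if $\alpha u$ lies in $U$, i.e.\ $u\in W_{\alpha+s}$ if and only if $u\in W_\alpha$. Combining this with the previous paragraph gives
\[
\dim(U\cap\alpha U)=\dim W_\alpha=\dim W_{\alpha+s}=\dim\bigl(U\cap(\alpha+s)U\bigr).
\]
(The hypothesis $\alpha\notin\mathbb{F}_q$ enters only to guarantee $\alpha+s\neq 0$, so that $(\alpha+s)U$ is a genuine cyclic shift; the dimension identity itself holds for every such $\alpha$ and $s$.)

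I anticipate no serious obstacle: the whole argument rests on the single elementary fact that $U$ is closed under multiplication by $\mathbb{F}_q$, so adjoining the term $su\in U$ to $\alpha u$ cannot affect membership in $U$. The only point needing a little care is the verification that $u\mapsto\beta u$ is injective with image precisely $U\cap\beta U$. Equivalently, one could phrase the proof as exhibiting the explicit bijection $x\mapsto(\alpha+s)\alpha^{-1}x$ from $U\cap\alpha U$ to $U\cap(\alpha+s)U$, with inverse $x\mapsto\alpha(\alpha+s)^{-1}x$; checking both maps land in the stated targets uses the same absorption property and is equally short.
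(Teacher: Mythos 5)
Your proof is correct and rests on the same mechanism as the paper's: the correspondence between $v\mapsto\alpha v$ and $v\mapsto(\alpha+s)v$ on the set of $v\in U$ with $\alpha v\in U$, together with the observation that $sv\in U$, so $\alpha v\in U$ if and only if $(\alpha+s)v=\alpha v+sv\in U$. Your packaging via the pullback subspace $W_\beta=\{u\in U\mid\beta u\in U\}$ gives the dimension equality in one step, whereas the paper pushes a basis of $U\cap\alpha U$ forward to a linearly independent subset of $U\cap(\alpha+s)U$ and then symmetrizes (replacing $s$ by $-s$) for the reverse inequality --- a cosmetic rather than substantive difference, and like the paper's argument yours never actually uses the full-length-orbit hypothesis.
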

	\begin{proof}
		Let $\alpha\in \mathbb{F}_{q^n}\backslash \mathbb{F}_q$, and let $\dim(U\cap\alpha U)=t.$ Let $s\in \mathbb{F}_q^*$. We claim that $\dim(U\cap (\alpha+s)U)=t$.
		Let $\{u_1,u_2,\ldots,u_t\}$ be a basis of $U\cap\alpha U.$ Then for each $i,~1\leq i\leq t,$ there exists a $v_i\in U$ such that $u_i=\alpha v_i,$ which implies that $u_i+s v_i= \alpha v_i+s v_i= (\alpha+s)v_i$. Let $u_i+s v_i = w_i$, $1\leq i\leq t$. Then $w_i=u_i+s v_i=(\alpha+s)v_i\in U\cap(\alpha+s)U.$ The set $\{u_1=\alpha v_1, u_2=\alpha v_2,\ldots, u_t=\alpha v_t\}$ is linearly independent over $\mathbb{F}_q,$ which implies that $\{w_1=(\alpha+s)v_1,w_2=(\alpha+s)v_2,\ldots, w_t=(\alpha+s)v_t\}$ is linearly independent over $\mathbb{F}_q$. Therefore $\dim(U\cap(\alpha+s)U)\geq \dim(U\cap\alpha U)$.
		
		Similarly, $\dim(U\cap(\alpha+s+(-s)) U)\geq \dim(U\cap (\alpha+s)U),$ which implies that $\dim(U\cap\alpha U)\geq\dim(U\cap(\alpha+s)U)$ and hence, $\dim(U\cap(\alpha+s)U)=\dim(U\cap\alpha U)=t.$
	\end{proof}
	%	Thus, if $\overline\alpha\in \mathcal L_i(U),$ then $\overline{\alpha+s}\in \mathcal L_i(U)$ for every $s\in \mathbb{ F}_q^*$.

	\begin{proposition}\label{prop3.1}
		Let $U$ be a subspace in $\mathcal{G}_q(n,k)$ such that $U$ generates a full-length orbit. Then, for any $\alpha\in \mathbb{F}_{q^n}\backslash \mathbb{F}_q,~ \lvert\{(\alpha+s)U\mid s\in \mathbb{F}_q\}\rvert =q$. 
		
	\end{proposition}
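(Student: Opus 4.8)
The plan is to prove the claim by showing that the map $s \mapsto (\alpha+s)U$ from $\mathbb{F}_q$ into $\mathrm{Orb}(U)$ is injective; since $\lvert \mathbb{F}_q \rvert = q$, this at once gives $\lvert\{(\alpha+s)U \mid s\in\mathbb{F}_q\}\rvert = q$. As a preliminary sanity check I would first note that each $(\alpha+s)U$ is a genuine element of $\mathrm{Orb}(U)$: because $\alpha \in \mathbb{F}_{q^n}\setminus\mathbb{F}_q$, we have $\alpha+s \neq 0$ for every $s\in\mathbb{F}_q$ (otherwise $\alpha = -s \in \mathbb{F}_q$, a contradiction), so $\alpha+s \in \mathbb{F}_{q^n}^*$ and the cyclic shift is well defined.

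For the injectivity, suppose $(\alpha+s)U = (\alpha+s')U$ for some $s,s'\in\mathbb{F}_q$. Multiplying both sides by $(\alpha+s)^{-1}$ gives $(\alpha+s)^{-1}(\alpha+s')\,U = U$, so by the definition of the stabilizer, $(\alpha+s)^{-1}(\alpha+s') \in \mathrm{Stab}(U)$. At this point I would invoke the hypothesis that $U$ generates a full-length orbit, which is precisely the condition $\mathrm{Stab}(U) = \mathbb{F}_q^*$. Hence there is a $c \in \mathbb{F}_q^*$ with $\alpha+s' = c(\alpha+s)$.

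The concluding step is to rearrange this relation and exploit that $\alpha \notin \mathbb{F}_q$. Collecting the $\alpha$-terms in $\alpha+s' = c\alpha+cs$ yields $(1-c)\alpha = cs - s'$. If $c \neq 1$, then $\alpha = (cs-s')/(1-c)$ would lie in $\mathbb{F}_q$, contradicting $\alpha \in \mathbb{F}_{q^n}\setminus\mathbb{F}_q$. Therefore $c = 1$, which forces $s' = s$, and injectivity follows.

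I do not expect a serious obstacle here, as every step reduces to elementary field arithmetic once the equality of shifts is translated into a stabilizer membership. The one point to state carefully is \emph{where} the full-length hypothesis enters: the argument hinges on having $\mathrm{Stab}(U) = \mathbb{F}_q^*$ exactly, so that $c$ is forced into $\mathbb{F}_q$ and the ratio $(cs-s')/(1-c)$ is trapped in $\mathbb{F}_q$. For a degenerate orbit with $\mathrm{Stab}(U) = \mathbb{F}_{q^t}^*$ and $t>1$ this step breaks down, so I would make sure the write-up flags that the full-length assumption is used essentially and is not merely cosmetic.
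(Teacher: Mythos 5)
Your proof is correct and follows essentially the same route as the paper's: translate $(\alpha+s)U=(\alpha+s')U$ into a stabilizer membership, use $\mathrm{Stab}(U)=\mathbb{F}_q^*$ to obtain a scalar $c$, and derive a contradiction with $\alpha\notin\mathbb{F}_q$ unless $c=1$ and $s=s'$. Your version is if anything slightly cleaner, adding the check that $\alpha+s\neq 0$ and organizing the argument as a direct injectivity claim rather than the paper's case split on $\delta$.
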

	\begin{proof}	
		Let $\alpha \in \mathbb{F}_q^n\backslash \mathbb{F}_q$. Suppose that $(\alpha+x)U=(\alpha+y)U$ for some $x,y\in \mathbb{F}_q^*$ with $x\neq y$. Then $(\alpha+x)(\alpha+y)^{-1}U=U$.  Since $U$ generates a full-length orbit, $\mbox{Stab}(U)=\mathbb{F}_q^*$. Therefore, $(\alpha+x)(\alpha+y)^{-1} =\delta \in\mathbb{F}_q^*$.
		If $\delta=1,$ then $x=y$, which is not true. Hence $\delta\neq1$, and  $\alpha+x =\delta(\alpha+y)$. This gives $\alpha=\frac{\delta y-x}{1-\delta} \in \mathbb{F}_q$,  which is a contradiction.
		Therefore, for $s\in \mathbb{F}_q, (\alpha+s)U$'s are all distinct subspaces of $\mathbb{F}_q^n$ and hence $\lvert\{(\alpha+s)U\mid s\in \mathbb{F}_q^*\}\rvert =q$.
	\end{proof}
	\begin{corollary}
		Let $U$ be a subspace in $\mathcal{G}_q(n,k)$ such that $U$ generates a full-length orbit. Then, for any $t,~ 0\leq t\leq k-1,$ either $\lambda_t=0$ or $\lambda_t\geq q$, where $\lambda_t$ is as defined above. 
	\end{corollary}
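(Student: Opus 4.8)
The plan is to combine the two results immediately preceding the statement, Lemma~\ref{lemma3.1} and Proposition~\ref{prop3.1}, which together say that whenever a given intersection dimension is attained at all, it is automatically attained by a whole block of $q$ distinct cyclic shifts. So the corollary should fall out with almost no extra work.

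First I would dispose of the trivial alternative: if $\lambda_t = 0$ there is nothing to prove, so assume $\lambda_t \neq 0$. Then $\mathcal{O}_t(U)$ is non-empty, and there exists $\alpha \in \mathbb{F}_{q^n}^*$ with $\dim(U \cap \alpha U) = t$. Since $t \le k-1 < k = \dim(U)$, the intersection $U \cap \alpha U$ is a proper subspace of $U$, so $\alpha U \neq U$ and hence $\alpha \notin \mbox{Stab}(U) = \mathbb{F}_q^*$. This places $\alpha \in \mathbb{F}_{q^n} \setminus \mathbb{F}_q$, which is exactly the hypothesis under which both Lemma~\ref{lemma3.1} and Proposition~\ref{prop3.1} apply.

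Next I would manufacture $q$ distinct elements of $\mathcal{O}_t(U)$ out of this single $\alpha$. By Lemma~\ref{lemma3.1}, for each $s \in \mathbb{F}_q^*$ we have $\dim\bigl(U \cap (\alpha+s)U\bigr) = \dim(U \cap \alpha U) = t$, and the case $s=0$ is immediate; therefore $(\alpha+s)U \in \mathcal{O}_t(U)$ for every $s \in \mathbb{F}_q$. By Proposition~\ref{prop3.1}, the collection $\{(\alpha+s)U \mid s \in \mathbb{F}_q\}$ consists of exactly $q$ distinct subspaces. Consequently $\mathcal{O}_t(U)$ contains at least these $q$ distinct members, and $\lambda_t = \lvert \mathcal{O}_t(U)\rvert \ge q$.

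I do not expect any genuine obstacle here, since the substance is entirely carried by the two cited results. The only point demanding a moment's care is confirming that the attaining shift $\alpha$ actually lies in $\mathbb{F}_{q^n} \setminus \mathbb{F}_q$ so that those results are applicable; this is forced by the combination of $t < \dim(U)$ (which gives $\alpha U \neq U$) with the full-length hypothesis $\mbox{Stab}(U) = \mathbb{F}_q^*$.
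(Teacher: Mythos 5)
Your proof is correct and follows essentially the same route as the paper: pick a representative $\alpha U \in \mathcal{O}_t(U)$, use Lemma~\ref{lemma3.1} to show all $q$ shifts $(\alpha+s)U$, $s \in \mathbb{F}_q$, lie in $\mathcal{O}_t(U)$, and use Proposition~\ref{prop3.1} to count them. Your extra check that $\alpha \in \mathbb{F}_{q^n}\setminus\mathbb{F}_q$ (forced by $t<k$ and $\mbox{Stab}(U)=\mathbb{F}_q^*$) is a detail the paper leaves implicit, and it is a worthwhile addition.
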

	\begin{proof}
		Let $0\leq t\leq k-1,~\mbox{and}~\lambda_t\neq 0$. Let $\alpha U\in \mathcal O_t(U)$, then $\dim(U \cap \alpha U)=t$. By Lemma \ref{lemma3.1}, $\dim(U\cap (\alpha+s)U)=t$ for all $s\in \mathbb{F}_q^*$, which implies that $\{(\alpha+s)U \mid s\in \mathbb{F}_q\}\subseteq\mathcal O_t(U)$. By Proposition \ref{prop3.1}, $\lvert\{(\alpha+s)U \mid s\in \mathbb{F}_q\}\rvert =q$. Hence, $\lambda_t=\lvert\mathcal O_t(U)\rvert\geq q$. 
	\end{proof}

	\begin{theorem}\label{th3.2}
		Let $U$ be a subspace in $\mathcal{G}_q(n,k)$, and let $U$ generate a full-length orbit. Then, for any $t,~ 0\leq t\leq k-1,~\lambda_t$ is a multiple of $q$.
	\end{theorem}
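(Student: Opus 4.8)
The plan is to partition the set $\mathcal{O}_t(U)$ into blocks each of size exactly $q$, so that $\lambda_t = \lvert\mathcal{O}_t(U)\rvert$ is forced to be a multiple of $q$. The blocks will be the additive-shift classes $A_\alpha := \{(\alpha+s)U \mid s \in \mathbb{F}_q\}$ already studied in Lemma \ref{lemma3.1} and Proposition \ref{prop3.1}; this sharpens the preceding corollary (which only gives $\lambda_t=0$ or $\lambda_t\geq q$) to the exact divisibility statement.

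First I would dispose of the trivial case: if $\lambda_t = 0$ there is nothing to prove, so assume $\mathcal{O}_t(U) \neq \emptyset$. The key preliminary observation is that every representative we need is admissible: if $\alpha U \in \mathcal{O}_t(U)$ then $\dim(U \cap \alpha U) = t < k$, so $\alpha U \neq U$ and hence $\alpha \notin \mathbb{F}_q$ (otherwise $\alpha U = U$). This lets me apply Lemma \ref{lemma3.1} and Proposition \ref{prop3.1} to every $\alpha$ that occurs: the former gives $A_\alpha \subseteq \mathcal{O}_t(U)$, since each shift keeps intersection dimension $t$, and the latter gives $\lvert A_\alpha\rvert = q$. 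Since $\alpha U \in A_\alpha$ (take $s=0$), the blocks $A_\alpha$ cover $\mathcal{O}_t(U)$.

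The substantive step, and the one I expect to be the main obstacle, is to show that two such blocks are either equal or disjoint. Suppose $A_\alpha \cap A_\beta \neq \emptyset$, say $(\alpha+s)U = (\beta+s')U$ with $s,s' \in \mathbb{F}_q$. Here I would invoke the full-length hypothesis $\mbox{Stab}(U)=\mathbb{F}_q^*$: since $\beta+s' \neq 0$, the equality forces $(\alpha+s)(\beta+s')^{-1} \in \mbox{Stab}(U) = \mathbb{F}_q^*$, i.e. $\alpha + s = \lambda(\beta+s')$ for some $\lambda \in \mathbb{F}_q^*$. Then for an arbitrary $c \in \mathbb{F}_q$ one rewrites $\alpha + c = \lambda\big(\beta + s' + \lambda^{-1}(c-s)\big)$, so that $(\alpha+c)U = (\beta + s'')U$ with $s'' = s' + \lambda^{-1}(c-s) \in \mathbb{F}_q$; this shows $A_\alpha \subseteq A_\beta$, and by symmetry $A_\alpha = A_\beta$. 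It is precisely this computation that uses $\mathbb{F}_q$ both additively (the shifts $s,s',c$) and multiplicatively (the stabilizer scalar $\lambda$), which is why the argument genuinely needs the stabilizer to be exactly $\mathbb{F}_q^*$ rather than a larger subfield.

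With covering and the disjoint-or-equal property established, $\mathcal{O}_t(U)$ is a disjoint union of blocks each of cardinality $q$, and therefore $\lambda_t$ is a multiple of $q$, completing the proof. The only bookkeeping point left for the write-up is to confirm that the same disjoint-or-equal argument makes each block independent of the chosen representative $\alpha$ of the subspace $\alpha U$, so that the partition is well defined.
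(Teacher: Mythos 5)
Your proposal is correct and follows essentially the same route as the paper: both partition $\mathcal{O}_t(U)$ into the additive-shift blocks $\{(\alpha+s)U \mid s\in\mathbb{F}_q\}$, use Lemma \ref{lemma3.1} and Proposition \ref{prop3.1} to show each block lies in $\mathcal{O}_t(U)$ and has size $q$, and use the stabilizer condition $\mbox{Stab}(U)=\mathbb{F}_q^*$ to show two blocks are equal or disjoint. Your version is slightly more carefully packaged (explicitly noting $\alpha\notin\mathbb{F}_q$ and the well-definedness of the blocks), but the key computation $\alpha+s=\lambda(\beta+s')$ is exactly the one in the paper.
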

	
	\begin{proof}
		Let $0\leq t\leq k-1,$ if $\lambda_t=0$ then the result is trivially true. Let $\lambda_t\neq 0,$ and $\alpha U\in \mathcal O_t(U)$. Then $\dim(U\cap \alpha U)=t$, and by Lemma \ref{lemma3.1}, $\dim(U\cap (\alpha+s)U)=t,$ for all $s\in \mathbb{F}_q$. Consider the set $X=\{(\alpha+\lambda) U\mid \lambda\in \mathbb{F}_q \}$. By Proposition \ref{prop3.1}, $\rvert X\lvert=q,$ and $X\subseteq \mathcal O_t(U)$. If $ X=\mathcal O_t(U)$ then the result is true. Suppose there exists a $\beta U\in\mathcal O_t(U)$ such that $\beta U\notin X$. Let $Y=\{(\beta+\delta)U\mid \delta\in \mathbb{F}_q\}$, by Lemma \ref{lemma3.1}, $Y\subseteq\mathcal O_t(U)$ and by Proposition \ref{prop3.1}, $\lvert Y\rvert=q$. 
		
		We will show that $Y\cap X=\emptyset$. 
		Let if possible $(\beta+y)U\in X$ for some $y\in\mathbb{F}_q^*$. Then $(\beta+y)U=(\alpha+x)U~ \mbox{for some}~ x\in \mathbb{F}_q$, i.e., $(\alpha+x)^{-1}(\beta+y)U = U$. Since $U$ generates a full-length orbit, the stabilizer of $U$ under the action $(\alpha, U) \mapsto \alpha U$ is $\mathbb{F}_q^\ast$, where $\alpha \in \mathbb{F}_{q^n}^\ast$. So we have $(\alpha+x)^{-1}(\beta+y) \in \mathbb{F}_q^\ast$, and hence $\beta+y = \delta(\alpha+x)$ for some $\delta\in \mathbb{F}_q^*$. From this we get $\beta =\delta(\alpha+x)-y$.
		
		Now let  $h=x-\delta^{-1}y$. Clearly, $h \in \mathbb{F}_q$. Then $\beta=\delta(\alpha+h)$, and hence  $\beta U=(\alpha+h) U$. Thus, $\beta U\in X$, which is a contradiction. Therefore, $Y\cap X=\emptyset.$ Thus, $\mathcal O_t(U)$ can be written as a disjoint union of sets each with cardinality $q.$ Hence $\lambda_t=\rvert\mathcal O_t(U)\lvert$ is a multiple of $q$.
	\end{proof}
	
	\begin{example}
		Let $q=3$ and  $n=10$. Let $z$ be a primitive element of $\mathbb{F}_{3^{10}}$. Let $x=z^{\frac{3^{10}-1}{3^2-1}}=z^{7381}$. Then the degree of the minimal polynomial of $x$ over $\mathbb{F}_3$ is $2$, and  $\mathbb{F}_{3^2}={\langle 1, x\rangle}_{\mathbb{F}_3}$. Define $U= \beta_1\mathbb{F}_{3^2}\oplus \beta_2\mathbb{F}_{3^2}\oplus a\mathbb{F}_3$, where $\beta_1=z^{1708},~\beta_2=z^{732}~\mbox{and}~a=z^{91}$. The dimension of $U$ over $\mathbb{F}_3$ is $5$. By using the Magma Computational Algebra System \cite{magma}, we obtained $\mbox{Stab}(U)=\mathbb{F}_3^*$, and $\lambda_0=17280,~ \lambda_1=11520,~\lambda_2=720,~\lambda_3=0, \mbox{and}~\lambda_4=3$. Clearly, each $\lambda_i$ is a multiple of $q=3$. 
	\end{example}

	\begin{theorem}\label{thm3.3}
		Let $U$ be a subspace in $\mathcal{G}_q(n,k)$ such that $U$ generates a full-length orbit. If $n$ is odd or $q$ is a power of 2, then for any $\beta \in \mathbb{F}_{q^n}\setminus\mathbb{F}_q$, the subspaces $\beta U$ and $\beta^{-1}U$ are distinct.
		%\begin{enumerate}
		%		\item If $n$ is an odd number then for any $\beta \in \mathbb{F}_{q^n}\backslash \mathbb{F}_q,~ \beta U$ and $\beta^{-1}U$ are distinct subspaces.
		%		\item If $q$ is a power of $2$ then for any $\beta \in \mathbb{F}_{q^n}\backslash\mathbb{F}_q,~ \beta U$ and $\beta^{-1}U$ are distinct subspaces.
		%	\end{enumerate}
\end{theorem}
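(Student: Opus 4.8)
The plan is to argue by contradiction and reduce the whole statement to a single algebraic condition. Suppose $\beta U=\beta^{-1}U$ for some $\beta\in\mathbb{F}_{q^n}\setminus\mathbb{F}_q$. Multiplying both sides by $\beta$ gives $\beta^2 U=U$, which says exactly that $\beta^2\in\mbox{Stab}(U)$. Since $U$ generates a full-length orbit, $\mbox{Stab}(U)=\mathbb{F}_q^*$, and therefore $\beta^2=c$ for some $c\in\mathbb{F}_q^*$. Thus the entire argument comes down to showing that $\beta^2\in\mathbb{F}_q^*$, together with either of the two hypotheses, forces $\beta\in\mathbb{F}_q$, contradicting the choice of $\beta$. (Note $\beta\neq 0$ automatically, since $0\in\mathbb{F}_q$.)

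For the case $n$ odd, I would view $\beta$ as a root of $f(x)=x^2-c\in\mathbb{F}_q[x]$. Because $\beta\notin\mathbb{F}_q$, this quadratic has no root in $\mathbb{F}_q$: its two roots are $\beta$ and $-\beta$, and $-\beta\in\mathbb{F}_q$ would give $\beta\in\mathbb{F}_q$ since $\mathbb{F}_q$ is closed under negation. Hence $f$ is irreducible over $\mathbb{F}_q$, so $\mathbb{F}_q(\beta)=\mathbb{F}_{q^2}$. But $\beta\in\mathbb{F}_{q^n}$ then embeds $\mathbb{F}_{q^2}$ inside $\mathbb{F}_{q^n}$, which forces $2\mid n$ and contradicts $n$ odd.

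For the case $q$ a power of $2$, I would instead use that in even characteristic the Frobenius map $x\mapsto x^2$ is an automorphism of $\mathbb{F}_q$, so $c$ has a (unique) square root $d\in\mathbb{F}_q$. Then, since $2=0$ in $\mathbb{F}_{q^n}$, we have $x^2-c=x^2-d^2=(x-d)^2$, and $\beta$, being a root, must satisfy $\beta=d\in\mathbb{F}_q$, again a contradiction. In both cases the assumption $\beta U=\beta^{-1}U$ is impossible, so $\beta U$ and $\beta^{-1}U$ are distinct.

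The only real step is the opening observation that $\beta U=\beta^{-1}U$ is equivalent to $\beta^2\in\mbox{Stab}(U)=\mathbb{F}_q^*$; once this is in place, the two parities of the situation are dispatched by elementary field theory (irreducibility of a quadratic, respectively perfectness of $\mathbb{F}_q$ in characteristic two), so I do not expect a substantive obstacle beyond setting up that reduction correctly.
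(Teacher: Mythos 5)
Your proof is correct. The opening reduction is exactly the paper's: from $\beta U=\beta^{-1}U$ and $\mbox{Stab}(U)=\mathbb{F}_q^*$ you get $\beta^2\in\mathbb{F}_q^*$, and the task becomes showing this forces $\beta\in\mathbb{F}_q$. Where you diverge is in how that contradiction is extracted. The paper works multiplicatively with element orders: from $\beta^{2(q-1)}=1$ it deduces that the order of $\beta$ divides $\gcd(q^n-1,\,2(q-1))$, observes that $\tfrac{q^n-1}{q-1}=1+q+\cdots+q^{n-1}$ is odd whenever $n$ is odd or $q$ is even, concludes the gcd is $q-1$, and hence $\beta\in\mathbb{F}_q$. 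You instead argue through the minimal polynomial: for $n$ odd, $x^2-c$ (with $c=\beta^2$) has no root in $\mathbb{F}_q$, is therefore irreducible, so $\mathbb{F}_q(\beta)=\mathbb{F}_{q^2}\subseteq\mathbb{F}_{q^n}$ forces $2\mid n$; for $q$ a power of $2$, perfectness of $\mathbb{F}_q$ gives $c=d^2$ and $x^2-c=(x-d)^2$, so $\beta=d\in\mathbb{F}_q$. Both finishes are elementary and complete; yours has the small advantage of connecting directly to the degree-of-$\alpha$ language the paper uses heavily elsewhere (e.g.\ Lemmas \ref{lemma3.5}--\ref{lemma3.8}), while the paper's gcd computation packages the two hypotheses ($n$ odd, $q$ even) into the single observation that $1+q+\cdots+q^{n-1}$ is odd, avoiding a case split in the key step.
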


\begin{proof}
	Let $\beta \in \mathbb{F}_{q^n}\backslash \mathbb{F}_q$. Suppose that $\beta U=\beta^{-1}U.$ Since $U$ generates a full-length orbit, $\beta U=\beta^{-1}U$ implies that $\beta^2\in \mathbb{F}_q^*$.
	
	Let $n$ be an odd number. As $\beta^2\in \mathbb{F}_q^*,$ we have $(\beta^2)^{(q-1)}=1$, which implies that the order of $\beta$ divides $2(q-1)$. Thus, the order of $\beta$ divides $\mbox{gcd}(q^n-1,2(q-1))$. Since $n$ is an odd number, then $\frac{q^n-1}{q-1}=1+q+\cdots+q^{n-1}$ is also an odd number. So, $\mbox{gcd}(q^n-1,2(q-1))=q-1$. Thus the order of $\beta$ divides $q-1$. This implies that $\beta\in \mathbb{F}_q$, which is a contradiction.
	
	If $q$ is a power of $2$,  then for any positive integer $n,~ \frac{q^n-1}{q-1}=1+q+\cdots+q^{n-1}$ is an odd number. By the above argument, we get a contradiction that $\beta \in \mathbb{F}_q$.
\end{proof}

\begin{corollary}
	Let $U$ be a subspace in $\mathcal{G}_q(n,k)$ such that $U$ generates a full-length orbit. If $n$ is odd and $q$ is a power of an odd prime then for each $t,~0\leq t\leq k-1,~\lambda_t=\rvert\mathcal O_t(U)\lvert$ is a multiple of $2q$.
\end{corollary}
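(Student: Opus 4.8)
The plan is to upgrade the divisibility statement of Theorem \ref{th3.2} from a multiple of $q$ to a multiple of $2q$ by producing, on top of the divisibility by $q$, an extra factor of $2$, and then exploiting that $q$ is odd. Since $q$ is a power of an odd prime, $\gcd(q,2)=1$; hence once we know that $\lambda_t$ is simultaneously divisible by $q$ and by $2$, coprimality forces $2q\mid\lambda_t$. Theorem \ref{th3.2} already supplies divisibility by $q$ for every $t$ with $0\le t\le k-1$, so the entire task reduces to showing that $\lambda_t=\lvert\mathcal{O}_t(U)\rvert$ is \emph{even}.

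To obtain evenness, I would exhibit a fixed-point-free involution on the set $\mathcal{O}_t(U)$, the natural candidate being the inversion map $\iota\colon \alpha U\mapsto \alpha^{-1}U$. First I would check that $\iota$ is well defined on $\mbox{Orb}(U)$: if $\alpha U=\beta U$ then $\beta^{-1}\alpha\in\mbox{Stab}(U)=\mathbb{F}_q^*$ (as $U$ generates a full-length orbit), so $\alpha=s\beta$ for some $s\in\mathbb{F}_q^*$, and therefore $\alpha^{-1}U=s^{-1}\beta^{-1}U=\beta^{-1}U$. Next, Lemma \ref{lemma2.1} gives $\dim(U\cap\alpha U)=\dim(U\cap\alpha^{-1}U)$, so $\iota$ maps $\mathcal{O}_t(U)$ into itself; and $\iota$ is an involution since $(\alpha^{-1})^{-1}=\alpha$.

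The crucial step is to verify that $\iota$ has no fixed point on $\mathcal{O}_t(U)$ for $0\le t\le k-1$. Here I would use that any $\alpha U\in\mathcal{O}_t(U)$ with $t<k$ satisfies $\alpha U\ne U$, whence $\alpha\notin\mathbb{F}_q$ (because $\mbox{Stab}(U)=\mathbb{F}_q^*$). Since $n$ is odd, Theorem \ref{thm3.3} then applies and yields $\alpha U\ne\alpha^{-1}U$, i.e. $\iota(\alpha U)\ne\alpha U$. Thus $\iota$ partitions $\mathcal{O}_t(U)$ into two-element orbits $\{\alpha U,\alpha^{-1}U\}$, forcing $\lambda_t$ to be even (the empty case $\lambda_t=0$ being trivial). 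Combining this with the divisibility by $q$ from Theorem \ref{th3.2} and with $\gcd(2,q)=1$ completes the argument.

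I expect the main obstacle to be the fixed-point-free verification, since it is precisely the point where both hypotheses enter: the oddness of $n$ is what makes Theorem \ref{thm3.3} available to rule out $\alpha U=\alpha^{-1}U$, while the oddness of $q$ is what allows the two separate divisibilities (by $q$ and by $2$) to be merged into divisibility by $2q$. Everything else is routine bookkeeping about the group action and its compatibility with inversion.
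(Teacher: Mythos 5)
Your proposal is correct and follows essentially the same route as the paper: evenness of $\lambda_t$ via the pairing $\alpha U \leftrightarrow \alpha^{-1}U$ (using Lemma \ref{lemma2.1} to stay inside $\mathcal{O}_t(U)$ and Theorem \ref{thm3.3} to rule out fixed points), combined with the divisibility by $q$ from Theorem \ref{th3.2} and the coprimality of $2$ and $q$. You are in fact somewhat more careful than the paper, which does not explicitly verify that the inversion map is well defined on cosets or that $\beta\notin\mathbb{F}_q$ before invoking Theorem \ref{thm3.3}.
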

\begin{proof}
	Let $n$ be an odd positive integer and let $q$ be a power of an odd prime. If $\lambda_t=0$ then the result is trivially true.  Let $\lambda_t \ne 0$, i.e., $\mathcal O_t(U)\neq \emptyset$. By Lemma \ref{lemma2.1}, $\dim (U \cap \beta U) = \dim (U \cap \beta^{-1} U)$ for any $\beta \in \mathbb{F}_{q^n}^\ast$. Therefore, $\beta^{-1}U \in \mathcal O_t(U)$ for any $\beta U \in \mathcal O_t(U)$. By Theorem \ref{thm3.3}, $\beta U$ and $\beta^{-1} U$  are distinct subspaces. This implies  that $\lambda_t$ is an even number. By Theorem \ref{th3.2}, $\lambda_t$ is a multiple of $q$. Hence $\lambda_t$ is a multiple of $2q$.
\end{proof}

\begin{notation}
	For a subspace $U$ in $\mathcal{G}_q(n,k)$ and $\alpha \in \mathbb{F}_{q^n}\backslash \mathbb{F}_q$, we define 
	\[U_\alpha=\{(\alpha+\delta)U\mid \delta\in \mathbb{F}_q\}~.\]
	Then for any $\lambda\in \mathbb{F}_q$, we have
	\[U_{(\alpha+\lambda)^{-1}}=\{((\alpha+\lambda)^{-1}+\delta)U\mid \delta\in \mathbb{F}_q\}~.\] 
	We define 
	\[\mathcal S_{\alpha,U}=U_\alpha \cup \big(\cup_{\lambda\in \mathbb{F}_q} U_{(\alpha+\lambda)^{-1}}\big)~.\]
\end{notation}

\begin{lemma}\label{lemma3.4}
	Let $U$ be a subspace in $\mathcal{G}_q(n,k),~\mbox{and let}~\alpha, \beta\in \mathbb{F}_{q^n}\backslash\mathbb{F}_q$. Suppose that $U$ generates a full-length orbit. Then $\mathcal S_{\alpha,U}$ and $\mathcal S_{\beta,U}$ are either identical or disjoint. 
\end{lemma}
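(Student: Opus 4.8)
The plan is to recognise the sets $\mathcal S_{\alpha,U}$ as the blocks of a partition of $\mbox{Orb}(U)$ and then to exploit the standard ``a shared element forces equality'' mechanism of equivalence classes. Concretely, I would reduce the lemma to the following \emph{core claim}: whenever $\beta U \in \mathcal S_{\alpha,U}$ one has $\mathcal S_{\beta,U} = \mathcal S_{\alpha,U}$. Granting this, if $\mathcal S_{\alpha,U}$ and $\mathcal S_{\beta,U}$ are not disjoint, choose a common element $\gamma U$; then $\gamma U \in \mathcal S_{\alpha,U}$ gives $\mathcal S_{\gamma,U} = \mathcal S_{\alpha,U}$, while $\gamma U \in \mathcal S_{\beta,U}$ gives $\mathcal S_{\gamma,U} = \mathcal S_{\beta,U}$, so $\mathcal S_{\alpha,U} = \mathcal S_{\beta,U}$. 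Since $\beta U \in \mathcal S_{\beta,U}$ always (the $\delta=0$ term of $U_\beta$), this yields exactly the desired dichotomy.

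The engine behind the core claim is that $\mathcal S_{\alpha,U}$ is unchanged when $\alpha$ is replaced by $s\alpha$ $(s\in\mathbb F_q^\ast)$, by $\alpha+c$ $(c\in\mathbb F_q)$, or by $\alpha^{-1}$. The first two invariances are routine re-indexing: for scaling one writes $(s\alpha+\delta)U=(\alpha+s^{-1}\delta)U$ and $((s\alpha+\lambda)^{-1}+\delta)U=((\alpha+s^{-1}\lambda)^{-1}+s\delta)U$, using that $s\in\mathbb F_q^\ast=\mbox{Stab}(U)$ fixes $U$, and then lets the dummy variables run over $\mathbb F_q$; for translation one simply re-indexes $\lambda\mapsto\lambda+c$. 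The full-length hypothesis enters twice: it guarantees $\mbox{Stab}(U)=\mathbb F_q^\ast$, so that absorbing an $\mathbb F_q^\ast$-scalar into $U$ is legitimate, and it lets me pass from an equality of subspaces $\beta U=\eta U$ to a scalar relation $\beta=s\eta$ with $s\in\mathbb F_q^\ast$, which is what converts ``$\beta U\in\mathcal S_{\alpha,U}$'' into a statement about $\beta$ up to $\mathbb F_q^\ast$.

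I expect the only real obstacle to be the invariance under inversion, $\mathcal S_{\alpha^{-1},U}=\mathcal S_{\alpha,U}$. I would establish $\mathcal S_{\alpha^{-1},U}\subseteq\mathcal S_{\alpha,U}$ and invoke symmetry (with $\alpha^{-1}$ in place of $\alpha$, noting $\alpha^{-1}\notin\mathbb F_q$) for the reverse inclusion. The block $U_{\alpha^{-1}}$ and the $\lambda=0$ block $U_{(\alpha^{-1})^{-1}}=U_\alpha$ lie in $\mathcal S_{\alpha,U}$ by inspection. For $\lambda\neq 0$ the key computation is the partial-fraction identity
\[
(\alpha^{-1}+\lambda)^{-1}=\frac{\alpha}{1+\lambda\alpha}=\frac{1}{\lambda}-\frac{1}{\lambda^2}\,(\alpha+\lambda^{-1})^{-1},
\]
from which a generic element $\big((\alpha^{-1}+\lambda)^{-1}+\delta\big)U$ equals $wU$ with $w=\big(\tfrac1\lambda+\delta\big)-\tfrac{1}{\lambda^2}(\alpha+\lambda^{-1})^{-1}$. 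Since the scalar $-\lambda^2$ lies in $\mathbb F_q^\ast$ and hence stabilises $U$, we may replace $w$ by $-\lambda^2 w=(\alpha+\lambda^{-1})^{-1}+\delta'$, where $\delta'=-\lambda-\lambda^2\delta\in\mathbb F_q$; thus the element lies in $U_{(\alpha+\lambda^{-1})^{-1}}\subseteq\mathcal S_{\alpha,U}$. Collecting the cases gives the inclusion.

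Finally I would assemble the core claim from the three invariances. If $\beta U\in\mathcal S_{\alpha,U}$, then either $\beta U=(\alpha+\delta)U$ or $\beta U=((\alpha+\lambda)^{-1}+\delta)U$ for some $\lambda,\delta\in\mathbb F_q$, and by the full-length hypothesis $\beta$ equals the corresponding field element up to an $\mathbb F_q^\ast$-scalar. In the first case, scaling- and translation-invariance give $\mathcal S_{\beta,U}=\mathcal S_{\alpha+\delta,U}=\mathcal S_{\alpha,U}$; in the second, scaling-, translation-, and inversion-invariance (the last with $\mu=\alpha+\lambda$) give $\mathcal S_{\beta,U}=\mathcal S_{(\alpha+\lambda)^{-1},U}=\mathcal S_{\alpha+\lambda,U}=\mathcal S_{\alpha,U}$. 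This proves the core claim and hence the lemma.
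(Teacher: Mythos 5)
Your proposal is correct and follows essentially the same route as the paper: both arguments rest on the ``a common element $\gamma U$ forces $\mathcal S_{\gamma,U}=\mathcal S_{\alpha,U}$'' mechanism, split into the same two cases ($\gamma U\in U_\alpha$ versus $\gamma U\in U_{(\alpha+\lambda)^{-1}}$), and both hinge on absorbing $\mathbb F_q^\ast$-scalars via $\mbox{Stab}(U)=\mathbb F_q^\ast$ together with the same partial-fraction identity $(\alpha^{-1}+\lambda)^{-1}=\lambda^{-1}-\lambda^{-2}(\alpha+\lambda^{-1})^{-1}$. Your packaging of the computations as three invariances (scaling, translation, inversion) is a tidier organization of the same calculations, and correctly isolates the $\lambda=0$ case that the division by $\lambda^2$ requires.
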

\begin{proof}
	Suppose that $\mathcal S_{\alpha,U}\cap\mathcal S_{\beta,U} \neq \emptyset$. Let $\gamma U\in \mathcal S_{\alpha,U} \cap \mathcal S_{\beta, U}$.	Then $\gamma U\in \mathcal S_{\alpha,U} = U_\alpha\cup \big(\cup_{\lambda\in \mathbb{F}_q}U_{({\alpha+\lambda})^{-1}}\big)$. \\
	\textbf{Case 1}. Let $\gamma U\in U_{\alpha}$. Then $\gamma U=(\alpha+\lambda)U$ for some $\lambda \in \mathbb{F}_q$. From this we get $\gamma(\alpha+\lambda)^{-1}U=U$. Since $U$ generates a full-length orbit, $\mbox{Stab}(U)=\mathbb{F}_q^*$. This implies that $\gamma(\alpha+\lambda)^{-1}\in \mathbb{F}_q^*$. Then there exists an element $s$ in $\mathbb{F}_q^*$ such that 
	\begin{equation}\label{eq2}
		\gamma = s(\alpha +\lambda)~.
	\end{equation}
	Now for any $\delta\in \mathbb{F}_q$, 
	\begin{equation}\label{eq3}
		\gamma+\delta=s(\alpha+\lambda+s^{-1}\delta)~.
	\end{equation} 
	This implies that $(\gamma+\delta)U=(\alpha+\lambda+s^{-1}\delta)U \in \mathcal S_{\alpha,U}$. Thus $U_{\gamma}\subseteq \mathcal S_{\alpha,U}$. 
	
	From equation (\ref{eq3}), $(\gamma+\delta)^{-1}=s^{-1}(\alpha+\lambda+s^{-1}\delta)^{-1}$. Then for any $x\in \mathbb{F}_q$, $(\gamma+\delta)^{-1}+x=s^{-1}\big((\alpha+\lambda+s^{-1}\delta)^{-1}+sx\big)$. From this we get 
	\[\big((\gamma+\delta)^{-1}+x\big)U=\big((\alpha+\lambda+s^{-1}\delta)^{-1}+sx\big)U~.\]
	As $\big((\alpha+\lambda+s^{-1}\delta)^{-1}+sx\big)U \in U_{(\alpha+\lambda+s^{-1}\delta)^{-1}}$, we get $U_{(\gamma+\delta)^{-1}}\subseteq \mathcal S_{\alpha,U},$ for every $\delta\in \mathbb{F}_q$. Hence, $\mathcal S_{\gamma,U}\subseteq \mathcal S_{\alpha,U}$. 
	
	From equation (\ref{eq2}), $\alpha=s^{-1}(\gamma-s\lambda)$. This implies that $\alpha U =(\gamma -s\lambda)U,~ \mbox{and hence}~\alpha U \in U_{\gamma}$. Therefore, $\mathcal S_{\alpha,U}\subseteq \mathcal S_{\gamma,U}$. Hence $\mathcal S_{\alpha,U}=\mathcal S_{\gamma,U}$.\\
	\textbf{Case 2}. Let $\gamma U \in U_{(\alpha+x)^{-1}}$, for some $x \in \mathbb{F}_q$. Then $\gamma U =\big((\alpha+x)^{-1}+y\big)U$, for some $y\in \mathbb{F}_q$. As $U$ generates a full-length orbit, there exists an element $b$ in $\mathbb{F}_q$ such that 
	\begin{equation}\label{eq4}
		\gamma = b\big((\alpha+x)^{-1}+y\big)~. 
	\end{equation}
	For any $c \in \mathbb{F}_q,~ \gamma+c = b\big((\alpha+x)^{-1}+y+b^{-1}c\big), ~\mbox{and}~ (\gamma+c)U =\big((\alpha+x)^{-1}+y+b^{-1}c\big)U \in U_{(\alpha+x)^{-1}}$. From this, we get $U_{\gamma}\subseteq U_{(\alpha+x)^{-1}}$. So $U_\gamma \subseteq S_{\alpha, U}$.
	
Now by equation (\ref{eq4}), for any $m$ in $\mathbb{F}_q$, $(\gamma+m)^{-1}=b^{-1}\big((\alpha+x)^{-1}+y+b^{-1}m\big)^{-1}$. Let $z=y+b^{-1}m\in \mathbb{F}_q$. So, we get $(\gamma+m)^{-1}=b^{-1}\big((\alpha+x)^{-1}+z\big)^{-1}$. This implies that $(\gamma+m)^{-1}= \frac{b^{-1}(\alpha+x)}{1+z(\alpha+x)}$. From this we get $(\gamma+m)^{-1}= -b^{-1}z^{-2} \big((\alpha+x+z^{-1})^{-1}-z\big)$. Then for any $g$ in $\mathbb{F}_q$, $(\gamma+m)^{-1}+g= -b^{-1}z^{-2} \big((\alpha+x+z^{-1})^{-1}-z-bz^2g\big)$. Thus $((\gamma+m)^{-1}+g)U=\big((\alpha+x+z^{-1})^{-1}-z-bz^2g\big)U\in U_{(\alpha+x+z^{-1})^{-1}}$. This implies that  $U_{(\gamma+m)^{-1}}\subseteq S_{\alpha,U}$ for any $m\in \mathbb{F}_q.$ Since we have $U_\gamma \subseteq S_{\alpha, U}$, we get  $\mathcal S_{\gamma,U}\subseteq \mathcal S_{\alpha,U}$.
	
	From equation (\ref{eq4}), we get $b^{-1}\gamma-y=(\alpha+x)^{-1}$. From it follows that $\alpha+x=b(\gamma-by)^{-1}$, which gives $\alpha=b\big((\gamma-by)^{-1}-b^{-1}x)\big)$. Thus, $\alpha U\in U_{(\gamma-by)^{-1}},~\mbox{and}~ \mathcal S_{\alpha,U}\subseteq \mathcal S_{\gamma,U}$. Hence, we get $\mathcal S_{\alpha,U}=\mathcal S_{\gamma,U}$.
	
	Similarly, $\mathcal S_{\gamma, U}=\mathcal S_{\beta,U}$. From this follows that $\mathcal S_{\alpha,U}~ \mbox{and}~ \mathcal S_{\beta,U}$ are identical. 
\end{proof}

\begin{lemma}\label{lemma3.5}
	Let $U$ be a subspace in $\mathcal G_q(n,k),~\mbox{and let}~\alpha\in \mathbb{F}_q^n\backslash \mathbb{F}_q$. Suppose that $U$ generates a full-length orbit and the degree of $\alpha$ over $\mathbb{F}_q$ is not equal to $2$. Then the cardinality of $\mathcal S_{\alpha,U}$ is $q(q+1)$. 
\end{lemma}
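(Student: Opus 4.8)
The plan is to realize $\mathcal{S}_{\alpha,U}$ as a disjoint union of $q+1$ blocks, each of cardinality $q$. By the Notation above, $\mathcal{S}_{\alpha,U}=U_\alpha\cup\big(\cup_{\lambda\in\mathbb{F}_q}U_{(\alpha+\lambda)^{-1}}\big)$, so it is the single block $U_\alpha$ together with the $q$ blocks $U_{(\alpha+\lambda)^{-1}}$, $\lambda\in\mathbb{F}_q$. First I would check that every base scalar appearing here lies outside $\mathbb{F}_q$: this is given for $\alpha$, and for $(\alpha+\lambda)^{-1}$ it holds because $(\alpha+\lambda)^{-1}\in\mathbb{F}_q$ would force $\alpha+\lambda\in\mathbb{F}_q$ and hence $\alpha\in\mathbb{F}_q$. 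Proposition \ref{prop3.1} then gives $\lvert U_\alpha\rvert=\lvert U_{(\alpha+\lambda)^{-1}}\rvert=q$ for every $\lambda$. Consequently, once I show the $q+1$ blocks are pairwise disjoint, I obtain $\lvert\mathcal{S}_{\alpha,U}\rvert=(q+1)q=q(q+1)$ at once.

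To prove pairwise disjointness I would use that $U$ generates a full-length orbit, so $\mbox{Stab}(U)=\mathbb{F}_q^*$ and hence $\mu U=\nu U$ (for $\mu,\nu\in\mathbb{F}_{q^n}^*$) if and only if $\mu=s\nu$ for some $s\in\mathbb{F}_q^*$. Thus two blocks can share a common subspace only when their defining scalars satisfy a relation of the shape $(\mbox{base}+\delta)=s\,(\mbox{base}'+\delta')$ with $\delta,\delta'\in\mathbb{F}_q$ and $s\in\mathbb{F}_q^*$. There are exactly two cases to rule out: (i) $U_\alpha\cap U_{(\alpha+\lambda)^{-1}}\neq\emptyset$, and (ii) $U_{(\alpha+\lambda)^{-1}}\cap U_{(\alpha+\lambda')^{-1}}\neq\emptyset$ with $\lambda\neq\lambda'$. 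In each case I would substitute the relevant scalars into $\mu=s\nu$ and clear denominators by multiplying through by $(\alpha+\lambda)$, and by $(\alpha+\lambda')$ as well in case (ii).

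In case (i), clearing denominators in $\alpha+\delta=s\big((\alpha+\lambda)^{-1}+\delta'\big)$ yields $(\alpha+\delta)(\alpha+\lambda)=s\big(1+\delta'(\alpha+\lambda)\big)$, i.e. a \emph{monic} quadratic equation for $\alpha$ with coefficients in $\mathbb{F}_q$; its $\alpha^2$-coefficient is $1\neq0$, so $\alpha$ satisfies a genuine degree-two polynomial over $\mathbb{F}_q$. Since $\alpha\notin\mathbb{F}_q$, this forces $[\mathbb{F}_q(\alpha):\mathbb{F}_q]=2$, contradicting the hypothesis that $\alpha$ has degree $\neq2$. This is precisely where the degree hypothesis is used. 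In case (ii) the analogous computation produces a relation whose $\alpha^2$-coefficient is $\delta-s\delta'$: if it is nonzero I again get a degree-two relation and the same contradiction, whereas if $\delta-s\delta'=0$ the relation collapses to the linear equation $(1-s)\alpha+(\lambda'-s\lambda)=0$, which forces either $\alpha\in\mathbb{F}_q$ (when $s\neq1$) or $\lambda=\lambda'$ (when $s=1$), both excluded.

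The main obstacle I anticipate is bookkeeping rather than depth: I must keep track of the degenerate subcase of (ii) where the quadratic term vanishes and confirm it cannot conceal a genuine overlap, and I must verify that each scalar I invert is nonzero (which again reduces to $\alpha\notin\mathbb{F}_q$). Once all $q+1$ blocks are shown pairwise disjoint, the count $q(q+1)$ follows immediately. Conceptually, the cleanest way to organize this is to attach to each block $U_\beta$ the two-dimensional $\mathbb{F}_q$-subspace $\langle1,\beta\rangle_{\mathbb{F}_q}$; the $q$ subspaces in $U_\beta$ are exactly the classes arising from $\langle1,\beta\rangle_{\mathbb{F}_q}$ other than the class of $U$ itself, so two blocks are disjoint as soon as these planes differ, and the degree-$\neq2$ hypothesis is exactly what guarantees that the $q+1$ planes $\langle1,\alpha\rangle_{\mathbb{F}_q}$ and $\langle1,(\alpha+\lambda)^{-1}\rangle_{\mathbb{F}_q}$ are distinct.
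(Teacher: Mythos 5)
Your proposal is correct and follows essentially the same route as the paper: decompose $\mathcal S_{\alpha,U}$ into the $q+1$ blocks $U_\alpha$ and $U_{(\alpha+\lambda)^{-1}}$, each of size $q$ by Proposition \ref{prop3.1}, and rule out overlaps by clearing denominators to obtain a quadratic relation for $\alpha$ over $\mathbb{F}_q$ (handling the degenerate linear subcase exactly as the paper does). Your closing reformulation via the planes $\langle 1,\beta\rangle_{\mathbb{F}_q}$ is a tidy packaging of the same computation, and your explicit check that $(\alpha+\lambda)^{-1}\notin\mathbb{F}_q$ is a small point the paper leaves implicit.
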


\begin{proof}
	By Proposition \ref{prop3.1},  $\lvert U_{\alpha}\rvert=\lvert U_{(\alpha+\lambda)^{-1}}\rvert=q,~\mbox{for all}~ \lambda\in \mathbb{F}_q$. We claim that all the sets $U_{\alpha}, U_{(\alpha+\lambda)^{-1}}, \lambda\in \mathbb{F}_q$, are pairwise disjoint. 
	
	First suppose that $U_\alpha\cap \big(\cup_{\lambda\in \mathbb{F}_q}U_{({\alpha+\lambda})^{-1}}\big)\neq \emptyset$. Then there exist $x,y,\delta \in \mathbb{F}_q$ such that \[(\alpha+x)U=((\alpha+y)^{-1}+\delta)U~.\]
	As $U$ generates a full-length orbit, $\frac{\alpha+x}{(\alpha+y)^{-1}+ \delta}\in \mathbb{F}_q^*$. Then $\alpha+x= \lambda ((\alpha+y)^{-1}+ \delta)$ for some $\lambda \in \mathbb{F}_q^*$. This gives $\alpha^2+ (x+y-\lambda \delta)\alpha+(x-\lambda\delta)y-\lambda=0$. Let $b=x+y-\lambda \delta~\mbox{and}~c=(x-\lambda\delta)y-\lambda$, then we have $\alpha^2+b\alpha+c=0,~\mbox{where}~b,c \in \mathbb{F}_q$. It means that the degree of the minimal polynomial of $\alpha$ over $\mathbb{F}_q$ is $2$, %, i.e., $[\mathbb{F}_q(\alpha):\mathbb{F}_q]=2,~\mbox{and}~\lvert\mathbb{F}_q(\alpha)\rvert=q^2$. Now, $\mathbb{F}_q(\alpha)$ is a subfield of $\mathbb{F}_q^n$. Then $n$ must be an even number
	which is not true. Hence, $U_\alpha\cap \big(\cup_{\lambda\in \mathbb{F}_q}U_{({\alpha+\lambda})^{-1}}\big) = \emptyset$. 
	
	Now suppose that $U_{({\alpha+\lambda})^{-1}}\cap U_{({\alpha+\delta})^{-1}}\neq \emptyset,~\mbox{for some distinct elements}~\lambda,\delta \in \mathbb{F}_q$. Then $((\alpha+\lambda)^{-1}+c)U=((\alpha+\delta)^{-1}+d)U,~\mbox{for some}~c,d\in \mathbb{F}_q$. As $U$ generates a full-length orbit, $\frac{(\alpha+\lambda)^{-1}+c}{(\alpha+\delta)^{-1}+d}\in \mathbb{F}_q^*$. Then there exists an element $s\in \mathbb{F}_q^*$ such that $ (\alpha+\lambda)^{-1}+c= s((\alpha+\delta)^{-1}+d)$. By solving this, we get
	\[(c-ds)(\alpha^2+(\lambda+\delta)\alpha+\lambda\delta)=(s-1)\alpha+(s\lambda-\delta)~.\] 
	If $c-ds=0$, then we get $(s-1)\alpha+(s\lambda-\delta)=0.$ 
	As $\lambda\neq \delta,~ s-1\neq 0.$ So, we get $\alpha\in \mathbb{F}_q$, which is not true. Thus $c-ds\neq 0$. So $\alpha$ is a root of a quadratic equation over $\mathbb{F}_q$. Then the degree of  $\alpha$ over $\mathbb{F}_q$ is $2$, which is not true. Therefore, $U_{({\alpha+\lambda})^{-1}}\cap U_{({\alpha+\delta})^{-1}}= \emptyset$. Hence all the sets $U_{\alpha}, U_{(\alpha+\lambda)^{-1}}, \lambda\in \mathbb{F}_q$, are pairwise disjoint and $\lvert \mathcal S_{\alpha,U}\rvert= q(q+1)$. 
\end{proof}

When the degree of $\alpha$ over $\mathbb{F}_q$ is $2$,  we have the following result for the cardinality of $\mathcal S_{\alpha,U}$.

\begin{lemma}\label{lemma3.7}
	Let $U$ be a subspace in $\mathcal G_q(n,k),~\mbox{and let}~\alpha\in \mathbb{F}_q^n\backslash \mathbb{F}_q$. Suppose that $U$ generates a full-length orbit and the degree  of $\alpha$ over $\mathbb{F}_q$ is equal to $2$. Then the cardinality of $\mathcal S_{\alpha,U}$ is $q$.
\end{lemma}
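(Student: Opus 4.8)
The plan is to show that, unlike the case treated in Lemma \ref{lemma3.5}, when $\deg_{\mathbb{F}_q}\alpha=2$ \emph{all} of the $q+1$ sets $U_\alpha$ and $U_{(\alpha+\lambda)^{-1}},~\lambda\in\mathbb{F}_q$, appearing in the definition of $\mathcal S_{\alpha,U}$ coincide with $U_\alpha$. Granting this, we immediately get $\mathcal S_{\alpha,U}=U_\alpha$, and then $\lvert\mathcal S_{\alpha,U}\rvert=\lvert U_\alpha\rvert=q$ by Proposition \ref{prop3.1}. The entire argument reduces to one structural observation, which I would isolate and prove first.

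\textbf{Key observation.} If $\beta=p\alpha+r$ for some $p\in\mathbb{F}_q^*$ and $r\in\mathbb{F}_q$, then $U_\beta=U_\alpha$. Indeed, for any $\delta\in\mathbb{F}_q$ we can write $\beta+\delta=p\big(\alpha+p^{-1}(r+\delta)\big)$, and since $p\in\mathbb{F}_q^*=\mbox{Stab}(U)$ the scalar $p$ is absorbed: $(\beta+\delta)U=p\big(\alpha+p^{-1}(r+\delta)\big)U=\big(\alpha+p^{-1}(r+\delta)\big)(pU)=\big(\alpha+p^{-1}(r+\delta)\big)U$, using $pU=U$. As $\delta$ runs over $\mathbb{F}_q$, the map $\delta\mapsto p^{-1}(r+\delta)$ permutes $\mathbb{F}_q$, so $U_\beta=\{(\alpha+\mu)U\mid\mu\in\mathbb{F}_q\}=U_\alpha$.

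Next I would verify that every inverse $(\alpha+\lambda)^{-1}$ has exactly the form required by the observation. Since $\deg_{\mathbb{F}_q}\alpha=2$, we have $\mathbb{F}_q(\alpha)=\mathbb{F}_{q^2}$, with $\{1,\alpha\}$ an $\mathbb{F}_q$-basis. For each $\lambda\in\mathbb{F}_q$, the element $\alpha+\lambda$ lies in $\mathbb{F}_{q^2}\setminus\mathbb{F}_q$ (because $\alpha\notin\mathbb{F}_q$), so $(\alpha+\lambda)^{-1}\in\mathbb{F}_{q^2}$; moreover $(\alpha+\lambda)^{-1}\notin\mathbb{F}_q$, since otherwise its inverse $\alpha+\lambda$ would lie in the subfield $\mathbb{F}_q$, a contradiction. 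Writing $(\alpha+\lambda)^{-1}=p_\lambda\alpha+r_\lambda$ in the basis $\{1,\alpha\}$, the fact that $(\alpha+\lambda)^{-1}\notin\mathbb{F}_q$ forces $p_\lambda\neq0$. By the key observation, $U_{(\alpha+\lambda)^{-1}}=U_\alpha$ for every $\lambda\in\mathbb{F}_q$, whence $\mathcal S_{\alpha,U}=U_\alpha\cup\big(\cup_{\lambda\in\mathbb{F}_q}U_\alpha\big)=U_\alpha$ and $\lvert\mathcal S_{\alpha,U}\rvert=q$.

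I expect no genuine computational obstacle here; one need not even compute $(\alpha+\lambda)^{-1}$ explicitly, only record that it lies in $\mathbb{F}_{q^2}\setminus\mathbb{F}_q$ and hence has nonzero $\alpha$-coordinate. The only points that require care are the scalar-absorption identity $pU=U$ for $p\in\mathbb{F}_q^*$ and the fact that the $\alpha$-coordinate $p_\lambda$ is nonzero. It is worth noting that this result is the exact mirror image of Lemma \ref{lemma3.5}: there, an intersection among the $q+1$ sets would have forced $\alpha$ to satisfy a quadratic over $\mathbb{F}_q$ and was therefore impossible; here $\deg_{\mathbb{F}_q}\alpha=2$ makes precisely those coincidences unavoidable, collapsing all $q+1$ sets onto the single set $U_\alpha$.
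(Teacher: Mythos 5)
Your proof is correct and follows essentially the same route as the paper's: both arguments reduce to showing that each $(\alpha+\lambda)^{-1}$ is an $\mathbb{F}_q^*$-multiple of an element of the form $\alpha+h$ with $h\in\mathbb{F}_q$, so that every $U_{(\alpha+\lambda)^{-1}}$ collapses into $U_\alpha$ and $\lvert\mathcal S_{\alpha,U}\rvert=\lvert U_\alpha\rvert=q$ by Proposition \ref{prop3.1}. The only difference is cosmetic: the paper derives $(\alpha+\lambda)^{-1}=(\lambda(b-\lambda)-c)^{-1}(\alpha+b-\lambda)$ explicitly from the minimal polynomial $\alpha^2+b\alpha+c=0$, whereas you obtain the same fact structurally from $(\alpha+\lambda)^{-1}\in\mathbb{F}_{q^2}\setminus\mathbb{F}_q$ having nonzero $\alpha$-coordinate in the basis $\{1,\alpha\}$.
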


\begin{proof}
	We claim that all the sets $U_{\alpha}, U_{(\alpha+\lambda)^{-1}}, \lambda\in \mathbb{F}_q$, are identical. Since the degree of the minimal polynomial of $\alpha$ over $\mathbb{F}_q$ is $2$, there exist elements $b,c~\mbox{in}~ \mathbb{F}_q$ such that 
	\[\alpha^{2}+b\alpha+c=0~. \]
	Then for any $\lambda\in \mathbb{F}_q$, we can rephrase the above equation as
%write $\alpha^2+\lambda\alpha-\lambda\alpha+b\alpha=-c$, from which we get $\alpha(\alpha+\lambda)+(b-\lambda)\alpha=-c$. Further, we can write it as $\alpha(\alpha+\lambda)+(b-\lambda)(\alpha+\lambda)=-c+\lambda(b-\lambda)$. This implies that
	\begin{equation}\label{eq5}
		(\alpha+\lambda) (\alpha+b-\lambda)=-c+\lambda(b-\lambda)~.
	\end{equation}
	Now if $\lambda(b-\lambda)-c=0$ then $\alpha+\lambda=0$ or $\alpha+b-\lambda=0$. In both the cases, we get $\alpha \in \mathbb{F}_q$, which is not true. Hence $\lambda(b-\lambda)-c\neq 0$.\\
	From equation (\ref{eq5}), we get 
	\[(\alpha+\lambda)^{-1}= (\lambda(b-\lambda)-c)^{-1}(\alpha+b-\lambda)~. \] 
	Now for any $\delta\in \mathbb{F}_q,~(\alpha+\lambda)^{-1}+\delta= (\lambda(b-\lambda)-c)^{-1}\big(\alpha+b-\lambda+(\lambda(b-\lambda)-c)\delta\big)$. Then $\big((\alpha+\lambda)^{-1}+\delta\big)U= \big(\alpha+b-\lambda+(\lambda(b-\lambda)-c)\delta\big)U \in U_\alpha$ as $b-\lambda+(\lambda(b-\lambda)-c)\delta\in \mathbb{F}_q$. So $U_{(\alpha+\lambda)^{-1}}\subseteq U_{\alpha}$. Since $\lambda\in \mathbb{F}_q$ is arbitrary, we conclude that $U_{(\alpha+\lambda)^{-1}}\subseteq U_{\alpha}$ for every $\lambda\in \mathbb{F}_q$. Hence $\mathcal S_{\alpha,U}=U_{\alpha}$. By Proposition \ref{prop3.1}, the cardinality of $U_\alpha$ is $q$. The result follows. 
\end{proof}

\begin{theorem}\label{thm3.7}
	Let $U$ be a subspace in $\mathcal{G}_q(n,k)$ such that $U$ generates a full-length orbit. If $n$ is an odd number then for each $t,~0\leq t\leq k-1,~\lambda_t=\lvert\mathcal O_t(U)\rvert$ is a multiple of $q(q+1)$.
\end{theorem}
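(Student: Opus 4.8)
The plan is to build a partition of $\mathcal O_t(U)$ into blocks of the form $\mathcal S_{\alpha,U}$, each of cardinality $q(q+1)$, so that $\lambda_t$ comes out as a multiple of $q(q+1)$. First I would dispose of the trivial case $\lambda_t=0$. For $\lambda_t\neq 0$, fix any $\alpha U\in \mathcal O_t(U)$; since $t\leq k-1$ we have $\alpha U\neq U$, and because $U$ generates a full-length orbit its stabilizer is $\mathbb{F}_q^*$, forcing $\alpha\in \mathbb{F}_{q^n}\setminus\mathbb{F}_q$. Thus the set $\mathcal S_{\alpha,U}$ is defined, and by construction $\alpha U\in U_\alpha\subseteq \mathcal S_{\alpha,U}$.

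The first substantive step is to check that $\mathcal S_{\alpha,U}\subseteq \mathcal O_t(U)$, i.e.\ every subspace appearing in $\mathcal S_{\alpha,U}$ meets $U$ in dimension exactly $t$. For the block $U_\alpha$ this is immediate from Lemma \ref{lemma3.1}. For each block $U_{(\alpha+\lambda)^{-1}}$ with $\lambda\in \mathbb{F}_q$, I would first note $\alpha+\lambda\in \mathbb{F}_{q^n}\setminus\mathbb{F}_q$, then combine Lemma \ref{lemma2.1} (inversion preserves the intersection dimension, so $\dim(U\cap(\alpha+\lambda)^{-1}U)=\dim(U\cap(\alpha+\lambda)U)$) with Lemma \ref{lemma3.1} (translation by $\mathbb{F}_q$ preserves it, so $\dim(U\cap(\alpha+\lambda)U)=\dim(U\cap\alpha U)=t$); a final application of Lemma \ref{lemma3.1} to the generator $(\alpha+\lambda)^{-1}$ then covers the whole block. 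Hence every block of $\mathcal S_{\alpha,U}$ sits inside $\mathcal O_t(U)$.

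The decisive point is where the hypothesis that $n$ is odd enters. The degree of $\alpha$ over $\mathbb{F}_q$ is $[\mathbb{F}_q(\alpha):\mathbb{F}_q]$, which divides $n$; since $n$ is odd this degree is odd, so it can never equal $2$. Consequently Lemma \ref{lemma3.7} (the exceptional case $\lvert\mathcal S_{\alpha,U}\rvert=q$) never applies, and Lemma \ref{lemma3.5} gives $\lvert\mathcal S_{\alpha,U}\rvert=q(q+1)$ for \emph{every} representative $\alpha$ occurring in $\mathcal O_t(U)$.

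To finish, I would invoke Lemma \ref{lemma3.4}: as $\alpha U$ ranges over $\mathcal O_t(U)$, the sets $\mathcal S_{\alpha,U}$ are pairwise either identical or disjoint, and each lies inside $\mathcal O_t(U)$ while containing its own generating element. They therefore partition $\mathcal O_t(U)$ into blocks of equal size $q(q+1)$, whence $\lambda_t=\lvert \mathcal O_t(U)\rvert$ is a multiple of $q(q+1)$. The only place needing care is the second step, verifying that an entire $\mathcal S_{\alpha,U}$ stays within a single intersection class; once that is secured, the counting is forced by Lemmas \ref{lemma3.4} and \ref{lemma3.5} together with the parity obstruction on the degree.
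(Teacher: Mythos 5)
Your proposal is correct and follows essentially the same route as the paper's own proof: partition $\mathcal O_t(U)$ into the sets $\mathcal S_{\alpha,U}$ via Lemmas \ref{lemma2.1}, \ref{lemma3.1}, \ref{lemma3.4} and \ref{lemma3.5}, using the oddness of $n$ to rule out degree-$2$ elements so that every block has size $q(q+1)$. The only difference is that you spell out a few details the paper leaves implicit (why $\alpha\notin\mathbb{F}_q$, and the block-by-block verification that $\mathcal S_{\alpha,U}\subseteq\mathcal O_t(U)$), which is harmless.
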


\begin{proof}
	If $\lambda_t=0$ then the result is trivially true. Let $\lambda_t\neq 0,$ and let $\alpha U\in \mathcal O_t(U)$. Then $\dim(U\cap \alpha U)=t$. By Lemma \ref{lemma2.1} and Lemma \ref{lemma3.1}, $\dim(U\cap \gamma U)= \dim(U\cap \alpha U)= t$ for any $\gamma U \in \mathcal S_{\alpha,U}$. Therefore, $\mathcal S_{\alpha,U}\subseteq \mathcal{O}_t(U)$. 
	
	We know that for any $x \in \mathbb{F}_{q^n}$, the degree of  $x$ over $\mathbb{F}_q$ divides $n$. Since $n$ is odd, there is no element in $\mathbb{F}_{q^n}$  of  degree $2$. Then by Lemma \ref{lemma3.5}, $\lvert \mathcal S_{\alpha,U}\rvert=q(q+1)$. Now suppose there is a subspace $\beta U \in \mathcal O_t(U)$ such that $\beta U \notin \mathcal S_{\alpha,U}$. Then by Lemma \ref{lemma3.4}, $\mathcal S_{\alpha,U}\cap \mathcal S_{\beta,U}=\emptyset$. Therefore, $\mathcal{O}_t(U)$ can be written as a disjoint union of sets each with cardinality $q(q+1)$. Hence, the cardinality of $\mathcal O_t(U)$ is a multiple of $q(q+1)$.
\end{proof}

\begin{example}
	Let $q=3$ and $n=11$. Let $z$ be a primitive element of $\mathbb{F}_{3^{11}}$. Define $U=\langle z^{13}, z^{17}, z^{21}, z^{23}\rangle_{\mathbb{F}_3}$. Here, $U$ is a vector space of dimension $4$ over $\mathbb{F}_3$. As $\mbox{gcd}(\dim(U), n)= 1,~\mbox{Stab}(U)=\mathbb{F}_3^*$. Thus, $U$ generates a full-length orbit. By using Magma, we have obtained the values of $\lambda_i$. We have $\lambda_0=87048,~ \lambda_1=1512,~ \lambda_2=12~\mbox{and}~\lambda_3=0$. Clearly, each $\lambda_i$ is a multiple of $q(q+1)=12$. 
\end{example}

\begin{lemma}\label{lemma3.8}
	If $n$ is an even positive integer, then
	\[\left\{x \in \mathbb{F}_{q^n}:~
	\deg_{\mathbb{F}_q}(x)=2\right\} = \mathbb{F}_{q^2}\backslash \mathbb{F}_q~.\]
\end{lemma}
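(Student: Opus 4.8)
The plan is to prove the set equality by double inclusion, relying only on the standard correspondence between the degree of an element and the subfield it generates, together with the lattice of subfields of a finite field. First I would recall the two facts that do all the work: for any $x \in \mathbb{F}_{q^n}$, the quantity $\deg_{\mathbb{F}_q}(x)$ is the degree of the minimal polynomial of $x$ over $\mathbb{F}_q$, which equals $[\mathbb{F}_q(x):\mathbb{F}_q]$, where $\mathbb{F}_q(x)$ is the subfield of $\mathbb{F}_{q^n}$ generated by $x$; and, since $n$ is even, $2 \mid n$, so $\mathbb{F}_{q^n}$ contains a \emph{unique} subfield of order $q^2$, namely $\mathbb{F}_{q^2}$. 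This parity hypothesis is exactly what is being used: it guarantees both that $\mathbb{F}_{q^2} \subseteq \mathbb{F}_{q^n}$ (so degree-$2$ elements exist at all) and that the degree-$2$ subfield is unique.

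For the inclusion $\subseteq$, suppose $\deg_{\mathbb{F}_q}(x) = 2$. Then $\mathbb{F}_q(x)$ is a degree-$2$ extension of $\mathbb{F}_q$ sitting inside $\mathbb{F}_{q^n}$, hence a field of order $q^2$. By uniqueness of the subfield of order $q^2$ in $\mathbb{F}_{q^n}$ we get $\mathbb{F}_q(x) = \mathbb{F}_{q^2}$, so $x \in \mathbb{F}_{q^2}$; moreover $x \notin \mathbb{F}_q$, since otherwise $\deg_{\mathbb{F}_q}(x)$ would be $1$. Thus $x \in \mathbb{F}_{q^2}\backslash \mathbb{F}_q$. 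For the reverse inclusion $\supseteq$, let $x \in \mathbb{F}_{q^2}\backslash \mathbb{F}_q$; because $2 \mid n$ we have $\mathbb{F}_{q^2}\subseteq \mathbb{F}_{q^n}$, so indeed $x \in \mathbb{F}_{q^n}$. From $x \in \mathbb{F}_{q^2}$ it follows that $\deg_{\mathbb{F}_q}(x) \mid 2$, hence $\deg_{\mathbb{F}_q}(x) \in \{1,2\}$, and $x \notin \mathbb{F}_q$ rules out degree $1$, leaving $\deg_{\mathbb{F}_q}(x) = 2$.

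I do not anticipate a genuine obstacle here, as the statement is a direct consequence of the subfield structure of finite fields. The only point worth stating carefully is the role of the parity of $n$: it is precisely the divisibility $2\mid n$ that places $\mathbb{F}_{q^2}$ inside $\mathbb{F}_{q^n}$ and makes it the one and only candidate for $\mathbb{F}_q(x)$, so that no element of $\mathbb{F}_{q^n}$ of degree $2$ can lie outside $\mathbb{F}_{q^2}$. (For odd $n$ there are no degree-$2$ elements at all, which is consistent with the use of this lemma only in the even case.)
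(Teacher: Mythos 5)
Your proof is correct and follows essentially the same route as the paper's: both directions rest on the uniqueness of the subfield of order $q^2$ in $\mathbb{F}_{q^n}$ when $2\mid n$, on the identification of $\deg_{\mathbb{F}_q}(x)$ with $[\mathbb{F}_q(x):\mathbb{F}_q]$, and on the divisibility $\deg_{\mathbb{F}_q}(x)\mid 2$ for $x\in\mathbb{F}_{q^2}$. No substantive difference from the paper's argument.
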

\begin{proof}
	Let $n$ be an even positive integer and let $T=\left\{x \in \mathbb{F}_{q^n}:~\deg_{\mathbb{F}_q}(x)=2\right\}$. As $n$ is even,  by \cite[Theorem 6.18]{finite},  $\mathbb{F}_{q^2}$ is the unique subfield of $\mathbb{F}_{q^n}$ with cardinality $q^2$. For any $\alpha\in \mathbb{F}_{q^2}\backslash\mathbb{F}_q$, the degree of the minimal polynomial of $\alpha$ divides $2$. %As dimension of $\mathbb{F}_{q^2}$ over $\mathbb{F}_q$ is $2$ then for any $\alpha \in \mathbb{F}_{q^2}\backslash\mathbb{F}_q$, the set $\{1,\alpha, \alpha^2\}$ is linearly dependent over $\mathbb{F}_q$. This implies that there exist $m,n \in \mathbb{F}_q$ with $n\neq 0$ such that $\alpha^2 +m\alpha +n=0$ and hence $\mbox{deg}_{\mathbb{F}_q}(\alpha)\leq 2$.  
	As $\alpha\notin \mathbb{F}_q$, degree of $\alpha$ cannot be  $1$. Thus the degree of  $\alpha$ is $2$. From this follows that $\mathbb{F}_{q^2}\backslash \mathbb{F}_{q}\subseteq T$.
	
	Let $\beta \in T$. Then $\mbox{deg}_{\mathbb{F}_q}(\beta)=2$.  Now, by \cite[Theorem 5.8]{finite} $\mathbb{F}_q(\beta)$ is a subfield of $\mathbb{F}_{q^n}$ and $\lvert \mathbb{F}_q(\beta)\rvert = q^2$. So $\mathbb{F}_q(\beta)=\mathbb{F}_{q^2}$, and $\beta \in \mathbb{F}_{q^2}\backslash\mathbb{F}_q$. Hence $T=\mathbb{F}_{q^2}\backslash \mathbb{F}_q$. 
	\end{proof}
\begin{remark}\label{remark3.1}
	Let $U\in \mathcal{G}_q(n,k)$ be such that $U$ generates a full-length orbit. Let $\gamma$ be an arbitrary element in $\mathbb{F}_{q^2}\backslash\mathbb{F}_q$. Then $\mathbb{F}_{q^2}\backslash \mathbb{F}_q=\{a+b\gamma\mid a, b\in \mathbb{F}_q,~ b\ne 0\}$. % By the definition of $\mathcal{S}_{\gamma,U}$,
	For any $x=a+b\gamma \in \mathbb{F}_{q^2}\backslash \mathbb{F}_q,~ xU= (a+b\gamma )U=(\gamma+b^{-1}a)U$. From this, we get $xU \in \mathcal{S}_{\gamma,U}$. Therefore, the cyclic shifts of $U$ of the form $\alpha U$, $\alpha \in \mathbb{F}_{q^2}\backslash \mathbb{F}_q$, in  $\mbox{Orb}(U)$  are contained in $\mathcal{S}_{\gamma,U}$, where $\gamma$ is an arbitrary element of $\mathbb{F}_{q^2}\backslash\mathbb{F}_q$.
\end{remark}

\begin{theorem}\label{thm3.9}
	Let $n$ be an even integer and let $U$ be a subspace in $\mathcal{G}_q(n,k)$. Let $\alpha\in \mathbb{F}_q^n\backslash \mathbb{F}_q$ and $\alpha\notin \mbox{Stab}(U)$. Let the degree of $\alpha$ over $\mathbb{F}_q$ be $2$. Let $V=U \cap \alpha U$ and $V\neq \{0\}$. Then $\mathbb{F}_{q^2}^*\subseteq \mbox{Stab}(V)$.
\end{theorem}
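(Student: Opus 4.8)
The plan is to first show that $\alpha$ alone stabilizes $V$, i.e.\ $\alpha V=V$, and then to promote this single element to the entire multiplicative group $\mathbb{F}_{q^2}^*$ using the field structure of $\mbox{Stab}(V)\cup\{0\}$ recorded in the Preliminaries. Since $n$ is even and $\deg_{\mathbb{F}_q}(\alpha)=2$, Lemma~\ref{lemma3.8} gives $\alpha\in\mathbb{F}_{q^2}\backslash\mathbb{F}_q$, so that $\mathbb{F}_q(\alpha)=\mathbb{F}_{q^2}$. I would then fix the minimal polynomial of $\alpha$ over $\mathbb{F}_q$, say $x^2+bx+c$ with $b,c\in\mathbb{F}_q$, which yields the identity
\[\alpha^2=-b\alpha-c~.\]
This identity, expressing $\alpha^2$ as an $\mathbb{F}_q$-linear combination of $1$ and $\alpha$, is the crucial structural input, and it is precisely where the hypothesis $\deg_{\mathbb{F}_q}(\alpha)=2$ enters.

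For the core step I would verify $\alpha V\subseteq V$ directly. Take $v\in V=U\cap\alpha U$. On the one hand $v\in U$, so $\alpha v\in\alpha U$. On the other hand $v\in\alpha U$ means $v=\alpha u$ for some $u\in U$, and then
\[\alpha v=\alpha^2 u=(-b\alpha-c)u=-b(\alpha u)-cu=-bv-cu~.\]
Since $v\in V\subseteq U$ and $u\in U$, the right-hand side lies in $U$, whence $\alpha v\in U$. Combining the two memberships gives $\alpha v\in U\cap\alpha U=V$, so $\alpha V\subseteq V$. As $\alpha\neq0$, multiplication by $\alpha$ is an injective $\mathbb{F}_q$-linear map, so $\dim(\alpha V)=\dim(V)$ and therefore $\alpha V=V$; that is, $\alpha\in\mbox{Stab}(V)$.

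Finally I would conclude using the field property. Because $V\neq\{0\}$ and $aV=V$ for every $a\in\mathbb{F}_q^*$, we have $\mathbb{F}_q^*\subseteq\mbox{Stab}(V)$, and by \cite[Lemma 3.3]{glue} the set $\mbox{Stab}(V)\cup\{0\}$ is a subfield of $\mathbb{F}_{q^n}$. This subfield contains both $\mathbb{F}_q$ and $\alpha$, hence it contains $\mathbb{F}_q(\alpha)=\mathbb{F}_{q^2}$, giving $\mathbb{F}_{q^2}^*\subseteq\mbox{Stab}(V)$ as required. The only delicate point is the inclusion $\alpha V\subseteq V$; everything else is formal. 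I expect no genuine obstacle, since the degree-$2$ hypothesis makes $\alpha^2$ collapse back into $\langle 1,\alpha\rangle_{\mathbb{F}_q}$ and thereby keeps the computation inside $U$. Had $\alpha$ been of higher degree, $\alpha^2 u$ would in general escape $U$ and the argument would break, which is consistent with the special role degree $2$ plays throughout the paper.
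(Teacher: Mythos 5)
Your proof is correct, but it takes a cleaner route than the paper's. The paper works at the level of subspace inclusions: it splits into cases according to whether $q$ is a power of $2$ or odd, and whether the linear coefficient $c$ of the minimal polynomial $\alpha^2+c\alpha+d$ vanishes; in the main case it rewrites the minimal polynomial to express $\alpha+c$ as $\alpha\bigl(1+(d^{-1}c)\alpha\bigr)^{-1}$, deduces the chain $V\subseteq(\alpha+c)U\cap\alpha(\alpha+c)U=(\alpha+c)V$, concludes $\alpha+c\in\mbox{Stab}(V)$, and only then recovers $\alpha\in\mbox{Stab}(V)$ using the field structure of $\mbox{Stab}(V)\cup\{0\}$. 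Your argument instead verifies $\alpha V\subseteq V$ elementwise: for $v=\alpha u\in V$ you have $\alpha v\in\alpha U$ trivially, and $\alpha v=\alpha^2u=-b(\alpha u)-cu=-bv-cu\in U$ because $v,u\in U$, so $\alpha v\in V$; the dimension count then gives $\alpha V=V$ directly. This handles both characteristics and both values of the trace coefficient uniformly, eliminating the paper's case analysis, and it isolates exactly where the degree-$2$ hypothesis is used (collapsing $\alpha^2$ into $\langle 1,\alpha\rangle_{\mathbb{F}_q}$). The concluding step --- promoting the single element $\alpha$ to all of $\mathbb{F}_{q^2}^*=\mathbb{F}_q(\alpha)^*$ via the fact that $\mbox{Stab}(V)\cup\{0\}$ is a subfield containing $\mathbb{F}_q$ --- is the same in both proofs. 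Your version is the one I would keep.
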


\begin{proof}
	As degree of $\alpha$ over $\mathbb{F}_q$ is $2$, there exist elements $c,d\in \mathbb{F}_q$ such that 
	\begin{equation}\label{eq6}
		\alpha^2+c\alpha+d=0~.
	\end{equation} 
	Since $\alpha \notin \mathbb{F}_q$, we have $d\neq 0$. \\
		\textbf{Case 1}. Let $q=2^s$.
		If $c=0$ then $\alpha^2=-d$. By \cite[Theorem 6.20]{finite}, every element of $\mathbb{F}_{2^s}$ has a unique square root in $\mathbb{F}_{2^s}$. Then $\alpha\in\mathbb{F}_q$, which is not true. So, $c\ne 0$. 
	Now equation (\ref{eq6}) can be rephrased as
	%\[d^{-1}c \alpha^2+d^{-1}c^{2}\alpha+\alpha+ c=\alpha~.\]
\[\big((d^{-1}c)\alpha+1\big)(\alpha+c)=\alpha~,\]
from which we get $\alpha+c =\frac{\alpha}{\big(1+(d^{-1}c)\alpha\big)}$. 
	Now, $\big(1+(d^{-1}c)\alpha\big)V\subseteq V+\alpha V\subseteq \alpha U$. From this follows that $V\subseteq \big(\frac{\alpha}{\big(1+(d^{-1}c)\alpha\big)}\big)U=(\alpha+c)U$. From (\ref{eq6}), $(\alpha^2+c\alpha)U=U$. Therefore, $V\subseteq (\alpha+c)U\cap\big(\alpha(\alpha+c)\big)U=(\alpha+c)V$. Hence $V=(\alpha+c)V$, as $(\alpha+c)V \subseteq V$. This implies that $\alpha+c\in \mbox{Stab}(V)$. The $\mbox{Stab}(V)\cup\{0\}$ is a subfield of $\mathbb{F}_{q^n}$ containing $\mathbb{F}_q^*$. As $c \in \mathbb{F}_q^\ast \subseteq \mbox{Stab}(V)$, it follows that $\alpha\in \mbox{Stab}(V)$. Thus $\mathbb{F}_q(\alpha)\subseteq \mbox{Stab}(V)\cup \{0\}$. As the degree of $\alpha$ over $\mathbb{F}_q$ is $2$, $\mathbb{F}_q(\alpha)=\mathbb{F}_{q^2}$. From this we conclude that $\mathbb{F}_{q^2}\subseteq \mbox{Stab}(V)\cup\{0\}$.
	\textbf{Case 2}. Let $q$ be a power of an odd prime.\\
	If $c=0$ then $\alpha^2 =-d\in \mathbb{F}_q^*$. Since $\alpha^2\in \mathbb{F}_q^*$, $\alpha^2 U=U$ and  $V\subseteq \alpha U\cap \alpha^2U=\alpha V$. So, we get $\alpha V=V$. Then $\alpha \in \mbox{Stab}(V)$. As $\mathbb{F}_q^*\subseteq \mbox{Stab}(V),~\mathbb{F}_{q^2}=\mathbb{F}_q(\alpha)\subseteq \mbox{Stab}(V) \cup\{0\}$. If $c\neq 0$ then by the argument used in case 1, we get $\mathbb{F}_{q^2}\subseteq \mbox{Stab}(V)\cup\{0\}$. 
\end{proof}

\begin{corollary}\label{cor3}
	Let $n$ be an even integer. Let $U$ be a subspace in $\mathcal{G}_q(n,k)$ such that $U$ generates a full-length orbit. Let $\beta\in \mathbb{F}_q^n\backslash \mathbb{F}_q$ be such that the degree of $\beta$ over $\mathbb{F}_q$ is $2$. Then $\dim_{\mathbb{F}_q}(U\cap \beta U)$ is an even number. 
\end{corollary}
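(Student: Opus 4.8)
The plan is to reduce the statement to Theorem~\ref{thm3.9} and then finish with a tower-of-fields dimension count. First I would dispose of the trivial case: if $U \cap \beta U = \{0\}$, then $\dim_{\mathbb{F}_q}(U \cap \beta U) = 0$, which is even, so there is nothing to prove.

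Thus assume $V := U \cap \beta U \neq \{0\}$. The next step is to verify that the hypotheses of Theorem~\ref{thm3.9} are met with $\alpha = \beta$. Since $U$ generates a full-length orbit, $\mbox{Stab}(U) = \mathbb{F}_q^*$; and since $\deg_{\mathbb{F}_q}(\beta) = 2$, we have $\beta \in \mathbb{F}_{q^n} \setminus \mathbb{F}_q$, hence $\beta \notin \mbox{Stab}(U)$. As $n$ is even, $\deg_{\mathbb{F}_q}(\beta) = 2$, and $V \neq \{0\}$, all the hypotheses of Theorem~\ref{thm3.9} are satisfied, and I may therefore conclude that $\mathbb{F}_{q^2}^* \subseteq \mbox{Stab}(V)$.

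The final step is the dimension argument. Recall from the preliminaries that $\mbox{Stab}(V) \cup \{0\}$ is a subfield of $\mathbb{F}_{q^n}$ over which $V$ is a vector space; since this subfield contains $\mathbb{F}_{q^2}$, the space $V$ is in particular an $\mathbb{F}_{q^2}$-vector space. Writing $\dim_{\mathbb{F}_{q^2}}(V) = m$, the tower law together with $[\mathbb{F}_{q^2} : \mathbb{F}_q] = 2$ gives $\dim_{\mathbb{F}_q}(V) = 2m$, which is even. This completes the argument.

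As for the main obstacle, there is essentially none once Theorem~\ref{thm3.9} is available: this corollary is a direct consequence of it. The only point requiring any care is the verification that $\beta \notin \mbox{Stab}(U)$, which is precisely where the full-length hypothesis is used; the remainder is the structural fact that any space admitting $\mathbb{F}_{q^2}$ in its stabilizer has $\mathbb{F}_q$-dimension divisible by $2$.
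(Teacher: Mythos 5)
Your proof is correct and follows essentially the same route as the paper: apply Theorem~\ref{thm3.9} to get $\mathbb{F}_{q^2}^*\subseteq \mbox{Stab}(V)$ and then conclude evenness from the fact that $V$ is a vector space over the subfield $\mbox{Stab}(V)\cup\{0\}\supseteq\mathbb{F}_{q^2}$ (the paper phrases this via $t$ being even and dividing $\dim_{\mathbb{F}_q}(V)$, which is the same dimension count). Your explicit treatment of the case $V=\{0\}$, which Theorem~\ref{thm3.9} excludes by hypothesis, is a small point of care that the paper's proof skips over.
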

\begin{proof}
	Let $V=U\cap \beta U$ and let  $\mbox{Stab}(V)\cup\{0\}=\mathbb{F}_{q^t}$. Then $t$  divides $\mbox{gcd}(\dim_{\mathbb{F}_q}(V),n)$. By Theorem \ref{thm3.9}, $\mathbb{F}_{q^2}\subseteq\mathbb{F}_{q^t}$. Therefore, $t$ is an even number. As $t$ divides  $\mbox{gcd}(\dim_{\mathbb{F}_q}(V),n)$, $\dim_{\mathbb{F}_q}(V)$ is an even number. 	
\end{proof}

\begin{theorem}\label{thm3.11}
	Let $n$ be an even number. Let $U$ be a subspace in $\mathcal{G}_q(n,k)$ such that $U$ generates a full-length orbit. Then for any odd $t,~ 0\leq t\leq k-1,~ \lambda_t=\lvert \mathcal O_t(U) \rvert$ is a multiple of $q(q+1)$. 
\end{theorem}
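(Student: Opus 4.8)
The plan is to mirror the argument of Theorem \ref{thm3.7}, partitioning $\mathcal{O}_t(U)$ into blocks of the form $\mathcal{S}_{\alpha,U}$, but now exploiting the parity constraint from Corollary \ref{cor3} to force each block to have cardinality $q(q+1)$ rather than the degenerate value $q$ permitted by Lemma \ref{lemma3.7}. First I would dispose of the trivial case $\lambda_t=0$. Assuming $\lambda_t\neq 0$, I fix $\alpha U\in \mathcal O_t(U)$, so $\dim(U\cap\alpha U)=t$. Since $U$ generates a full-length orbit, $\mbox{Stab}(U)=\mathbb{F}_q^*$; and because $t\leq k-1<k=\dim U$, we have $\alpha U\neq U$, hence $\alpha\notin\mathbb{F}_q$.

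The crucial observation is that, because $t$ is \emph{odd}, $\alpha$ cannot have degree $2$ over $\mathbb{F}_q$: if $\deg_{\mathbb{F}_q}(\alpha)=2$, then Corollary \ref{cor3} would force $\dim(U\cap\alpha U)$ to be even, contradicting the oddness of $t$. Consequently Lemma \ref{lemma3.5} applies and yields $\lvert\mathcal S_{\alpha,U}\rvert=q(q+1)$. Next I would verify that the entire block $\mathcal S_{\alpha,U}$ lies inside $\mathcal O_t(U)$: this is precisely the combination of Lemma \ref{lemma3.1} (shifting by an element of $\mathbb{F}_q$ preserves the intersection dimension) and Lemma \ref{lemma2.1} (passing to the inverse preserves it). Applied to $\alpha$, to $\alpha+\lambda$, and to $(\alpha+\lambda)^{-1}$, these show every generator of $U_\alpha$ and of each $U_{(\alpha+\lambda)^{-1}}$ has intersection dimension $t$, so $\mathcal S_{\alpha,U}\subseteq \mathcal O_t(U)$.

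Finally I would assemble the partition. If $\mathcal S_{\alpha,U}=\mathcal O_t(U)$ we are done; otherwise I pick $\beta U\in \mathcal O_t(U)\setminus\mathcal S_{\alpha,U}$. Since $\dim(U\cap\beta U)=t$ is again odd, the same reasoning gives $\deg_{\mathbb{F}_q}(\beta)\neq 2$ and $\lvert\mathcal S_{\beta,U}\rvert=q(q+1)$. By Lemma \ref{lemma3.4} the blocks $\mathcal S_{\alpha,U}$ and $\mathcal S_{\beta,U}$ are either identical or disjoint, and as $\beta U$ separates them they must be disjoint. Iterating this selection exhausts $\mathcal O_t(U)$ as a disjoint union of blocks each of cardinality $q(q+1)$, whence $\lambda_t=\lvert\mathcal O_t(U)\rvert$ is a multiple of $q(q+1)$.

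The one genuinely new ingredient beyond the odd-$n$ theorem is the parity step, and that is the only place demanding care: the oddness of $t$ must be used precisely to rule out degree-$2$ elements via Corollary \ref{cor3}, since these are exactly the elements that would produce the small blocks of size $q$ from Lemma \ref{lemma3.7} and break the uniform $q(q+1)$ block size. Everything else is a faithful transcription of the bookkeeping already carried out in Theorem \ref{thm3.7}.
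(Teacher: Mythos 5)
Your proposal is correct and follows essentially the same route as the paper's proof: both use Corollary \ref{cor3} to rule out degree-$2$ elements because $t$ is odd, apply Lemma \ref{lemma3.5} to get blocks $\mathcal{S}_{\alpha,U}$ of size $q(q+1)$, and partition $\mathcal{O}_t(U)$ via Lemmas \ref{lemma2.1}, \ref{lemma3.1}, and \ref{lemma3.4}. No gaps; your treatment is, if anything, slightly more explicit about the trivial case and the contrapositive step.
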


\begin{proof}
	Let $t$ be an odd number in the range $0\leq t\leq k-1$.  Let $\alpha U\in \mathcal{O}_t(U)$. Then $\dim(U\cap \alpha U)=t$, and by Lemma \ref{lemma2.1} and Lemma \ref{lemma3.1}, $\mathcal{S}_{\alpha,U}\subseteq\mathcal{O}_{t}(U)$. As $\dim (U\cap \alpha U)$ is odd, by Corollary \ref{cor3} the degree of $\alpha$ over $\mathbb{F}_q$ is not equal to $2$. By Lemma \ref{lemma3.5}, the cardinality of $\mathcal{S}_{\alpha,U}$ is $q(q+1)$.
	
	Now suppose that there exists $\beta U \in \mathcal{O}_t(U)$ such that $\beta U \notin \mathcal{S}_{\alpha,U}$. Then $\mathcal{S}_{\beta,U}\subseteq \mathcal{O}_{t}(U)$ and by Lemma \ref{lemma3.4}, $\mathcal{S}_{\alpha,U}\cap\mathcal{S}_{\beta,U}=\emptyset$. Therefore, $\mathcal{O}_t(U)$ can be written as a disjoint union of the sets each with cardinality $q(q+1)$. Hence the cardinality of $\mathcal{O}_t(U)$ is a multiple of $q(q+1)$. 
\end{proof}

\begin{lemma}\label{lemma7}
	Let $U$ be a subspace in $\mathcal{P}_q(n)$. Then the subspace generated by all the cyclic shifts of $\mathbb{F}_{q^2}$ contained in $U$ is of the form 
	\[W=u_1\mathbb{F}_{q^2}\oplus u_2\mathbb{F}_{q^2}\oplus\cdots \oplus u_r\mathbb{F}_{q^2}~, \]
	where $u_1,u_2,\ldots,u_r \in U$, and we take $r=0$ when $W = \{0\}$. Further, the total number of distinct cyclic shifts of $\mathbb{F}_{q^2}$ contained in  $U$ is $\frac{q^{2r}-1}{q^2-1}$ . 
\end{lemma}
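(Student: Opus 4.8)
The plan is to show that $W$ is not merely an $\mathbb{F}_q$-subspace but in fact an $\mathbb{F}_{q^2}$-subspace of $\mathbb{F}_{q^n}$; once this is established, the decomposition $W=u_1\mathbb{F}_{q^2}\oplus\cdots\oplus u_r\mathbb{F}_{q^2}$ is just the statement that a vector space over $\mathbb{F}_{q^2}$ admits a basis, and the counting formula becomes the count of $1$-dimensional $\mathbb{F}_{q^2}$-subspaces of an $r$-dimensional $\mathbb{F}_{q^2}$-space.

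First I would record two elementary facts about cyclic shifts of $\mathbb{F}_{q^2}$. Since $\mathbb{F}_{q^2}$ is the subfield of size $q^2$ (here one uses $2\mid n$; if $\mathbb{F}_{q^2}\not\subseteq\mathbb{F}_{q^n}$ there are no such shifts, so $W=\{0\}$, $r=0$, the count is $\tfrac{q^0-1}{q^2-1}=0$, and the statement holds vacuously), for every $\alpha\in\mathbb{F}_{q^n}^*$ and $c\in\mathbb{F}_{q^2}^*$ we have $c\,\alpha\mathbb{F}_{q^2}=\alpha\mathbb{F}_{q^2}$. Thus each shift $\alpha\mathbb{F}_{q^2}$ is itself a $1$-dimensional $\mathbb{F}_{q^2}$-subspace, and conversely every $1$-dimensional $\mathbb{F}_{q^2}$-subspace of $\mathbb{F}_{q^n}$ is of the form $\alpha\mathbb{F}_{q^2}$ for some nonzero $\alpha$. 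Moreover $\mbox{Stab}(\mathbb{F}_{q^2})=\mathbb{F}_{q^2}^*$, so $\alpha\mathbb{F}_{q^2}=\beta\mathbb{F}_{q^2}$ iff $\alpha^{-1}\beta\in\mathbb{F}_{q^2}^*$; hence distinct shifts are in bijection with the $1$-dimensional $\mathbb{F}_{q^2}$-subspaces they span.

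The key step is the $\mathbb{F}_{q^2}$-closure of $W$. Let $\mathcal{A}=\{\alpha\mathbb{F}_{q^2}\mid \alpha\mathbb{F}_{q^2}\subseteq U\}$ be the collection of shifts contained in $U$, so that $W=\langle\bigcup_{A\in\mathcal{A}}A\rangle_{\mathbb{F}_q}$. I would compare $W$ with the $\mathbb{F}_{q^2}$-span $W'=\langle\bigcup_{A\in\mathcal{A}}A\rangle_{\mathbb{F}_{q^2}}$. Trivially $W\subseteq W'$. For the reverse inclusion, an arbitrary element of $W'$ is a sum $\sum_j c_j v_j$ with $c_j\in\mathbb{F}_{q^2}$ and $v_j\in A_j\in\mathcal{A}$; since each $A_j$ is $\mathbb{F}_{q^2}$-closed we have $c_j v_j\in A_j$, so the whole sum lies in $\langle\bigcup_{A\in\mathcal{A}}A\rangle_{\mathbb{F}_q}=W$. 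Therefore $W=W'$ is closed under multiplication by $\mathbb{F}_{q^2}$, i.e., $W$ is an $\mathbb{F}_{q^2}$-subspace. Choosing an $\mathbb{F}_{q^2}$-basis $u_1,\dots,u_r$ of $W$ (with $r=0$ and $W=\{0\}$ when $\mathcal{A}=\emptyset$) gives the asserted form $W=u_1\mathbb{F}_{q^2}\oplus\cdots\oplus u_r\mathbb{F}_{q^2}$, and $W\subseteq U$ because $U$ is an $\mathbb{F}_q$-subspace containing every $A\in\mathcal{A}$.

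Finally I would count. A shift $\alpha\mathbb{F}_{q^2}$ lies in $U$ exactly when it lies in $W$: if $\alpha\mathbb{F}_{q^2}\subseteq U$ then it is one of the generating shifts and hence $\alpha\mathbb{F}_{q^2}\subseteq W$, while conversely $W\subseteq U$. So counting shifts in $U$ reduces to counting shifts in $W$, and by the first step these are precisely the $1$-dimensional $\mathbb{F}_{q^2}$-subspaces of the $r$-dimensional $\mathbb{F}_{q^2}$-space $W$, of which there are $\frac{(q^2)^r-1}{q^2-1}=\frac{q^{2r}-1}{q^2-1}$. I expect the only real subtlety to be the $\mathbb{F}_{q^2}$-closure argument: one must resist arguing elementwise about $\mathbb{F}_q$-combinations and instead exploit that each individual shift is already an $\mathbb{F}_{q^2}$-line, which is exactly what forces the $\mathbb{F}_q$-span and the $\mathbb{F}_{q^2}$-span to coincide.
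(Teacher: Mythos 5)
Your proof is correct, and it reaches the decomposition by a genuinely different route than the paper. The paper builds $W$ by hand: it picks shifts $u_1\mathbb{F}_{q^2}, u_2\mathbb{F}_{q^2},\ldots$ one at a time, each not contained in the span of the previous ones, and shows at every stage that the new shift meets the partial sum trivially (the key device being that a nonzero $w$ in the intersection, together with $\beta w$ for $\beta\in\mathbb{F}_{q^2}\setminus\mathbb{F}_q$, already spans the whole line $u_j\mathbb{F}_{q^2}$ over $\mathbb{F}_q$, forcing that line into the partial sum); the direct-sum form then emerges from this induction. You instead prove the single structural fact that the $\mathbb{F}_q$-span of a union of $\mathbb{F}_{q^2}$-lines coincides with their $\mathbb{F}_{q^2}$-span, so $W$ is an $\mathbb{F}_{q^2}$-subspace outright, and the decomposition is just the choice of an $\mathbb{F}_{q^2}$-basis. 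This is cleaner and makes explicit the $\mathbb{F}_{q^2}$-closure that the paper's induction uses only implicitly (its step relies on the partial sums being $\mathbb{F}_{q^2}$-closed, and the final ``continuing in this way'' leans on that). The counting halves of the two arguments are essentially identical: both reduce shifts in $U$ to shifts in $W$ and then count $\mathbb{F}_{q^2}$-lines in an $r$-dimensional $\mathbb{F}_{q^2}$-space, the paper phrasing this as nonzero elements of $W$ modulo the action of $\mathbb{F}_{q^2}^*$ and you phrasing it directly as one-dimensional subspaces. A small bonus of your write-up is that you dispose explicitly of the case $\mathbb{F}_{q^2}\not\subseteq\mathbb{F}_{q^n}$, which the paper leaves to the reader.
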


\begin{proof}
	We construct the subspace generated by all the cyclic shifts of $\mathbb{F}_{q^2}$ contained in $U$ as follows. If $U$ has no cyclic shift of $\mathbb{F}_{q^2}$ then we are done. Therefore, let us assume that $U$ contains at least one cyclic shift of $\mathbb{F}_{q^2}$. Let $u_1\mathbb{F}_{q^2}$ be a cyclic shift of $\mathbb{F}_{q^2}$ in $U$. %If $U$ contains no other cyclic shift of $\mathbb{F}_{q^2}$, then $W = u_1\mathbb{F}_{q^2}$. 
Let $u_2\mathbb{F}_{q^2}$ be a cyclic shift of $\mathbb{F}_{q^2}$ in $U$ such that $u_1\mathbb{F}_{q^2}\neq u_2\mathbb{F}_{q^2}$. Suppose that there exists $y \ne 0$ such that $y\in u_1\mathbb{F}_{q^2}\cap u_2\mathbb{F}_{q^2}$. Then $y=u_1z_1=u_2z_2$ for some $z_1,z_2\in \mathbb{F}_{q^2}^*$. This implies that $u_2= u_1z_1z_2^{-1}$, and thus $u_2 \in u_1\mathbb{F}_{q^2}$. This gives $u_2\mathbb{F}_{q^2}\subseteq u_1\mathbb{F}_{q^2}$. As $|u_1\mathbb{F}_{q^2}| = |u_1\mathbb{F}_{q^2}|$, it follows that $u_1\mathbb{F}_{q^2}=u_2\mathbb{F}_{q^2}$, which is a contradiction. Hence $u_1 \mathbb{F}_q^2\cap u_2\mathbb{F}_q^2=\{0\}$. 
	
	 Now let $u_3\mathbb{F}_{q^2}\subseteq U$ be such that $u_3\mathbb{F}_{q^2}\nsubseteq (u_1\mathbb{F}_{q^2}\oplus u_2\mathbb{F}_{q^2})$. Suppose there exists $w \ne 0$ such that $w\in u_3\mathbb{F}_{q^2}\cap  \big(u_1 \mathbb{F}_q^2\oplus u_2\mathbb{F}_q^2\big)$, and let $\beta \in \mathbb{F}_{q^2}\backslash \mathbb{F}_q$. Then $w$ and $\beta w$ both are in $u_3 \mathbb{F}_q^2$, and the subspace $\langle w, \beta w\rangle_{\mathbb{F}_q}$ of $u_3 \mathbb{F}_q^2$ generated by $w$ and $\beta w$ has dimension $2$ over $\mathbb{F}_q$. Therefore, $\langle w,\beta w\rangle_{\mathbb{F}_q} = u_3\mathbb{F}_{q^2}$.
	 
	 As $w\in u_1 \mathbb{F}_q^2\oplus u_2\mathbb{F}_q^2$, we have $w= u_1 x_1+u_2 x_2$ for some $x_1,x_2\in \mathbb{F}_{q^2}$, and hence $\beta w = \beta u_1x_1+\beta u_2x_2$. As $\beta x_1, \beta x_2\in \mathbb{F}_{q^2}$, we get $\beta w\in u_1\mathbb{F}_{q^2}\oplus u_2\mathbb{F}_{q^2}$. Therefore, $u_3\mathbb{F}_{q^2}=\langle w,\beta w\rangle_{\mathbb{F}_q}\subseteq u_1\mathbb{F}_{q^2}\oplus u_2\mathbb{F}_{q^2}$, which is a contradiction. Hence $u_3\mathbb{F}_{q^2}\cap \big(u_1\mathbb{F}_{q^2}\oplus u_2\mathbb{F}_{q^2}\big)=\{0\}$. Continuing in this way, we get that the subspace generated by all the cyclic shifts of $\mathbb{F}_{q^2}$ contained in $U$ is of the form  
\[W = u_1\mathbb{F}_{q^2}\oplus u_2\mathbb{F}_{q^2}\oplus\cdots\oplus u_r\mathbb{F}_{q^2}~,\]
	 where $u_1,u_2,\ldots, u_r\in U$. Clearly, $|W| = q^{2r}$.

	%Let $W=u_1\mathbb{F}_{q^2}\oplus u_2\mathbb{F}_{q^2}\oplus\cdots\oplus u_r\mathbb{F}_{q^2}$  is a subspace generated by all the cyclic shifts of $\mathbb{F}_{q^2}$ contained in $U$. 
Now we count the number of cyclic shifts of $\mathbb{F}_{q^2}$ in $U$. If $U$ does not contain any cyclic shift of $\mathbb{F}_{q^2}$ then $W=\{0\}$, and it trivially holds that there are $\frac{q^{2r}-1}{q^2-1}$ cyclic shifts of $\mathbb{F}_{q^2}$ in $U$. Now suppose that $U$ contains at least one cyclic shift of $\mathbb{F}_{q^2}$. Let $x\mathbb{F}_{q^2}$ be an arbitrary cyclic shift of $\mathbb{F}_{q^2}$ in $W$.  Then $x=x.1\in x\mathbb{F}_{q^2} \subseteq W$, implying that $x \in W$. %Also, for any $0\neq w\in W,~w\mathbb{F}_{q^2}\subseteq W$. 
From this follows that all the cyclic shifts of $\mathbb{F}_{q^2}$ contained in $W$ are generated by non-zero elements of $W$. Thus the cyclic shifts of $\mathbb{F}_{q^2}$ in $U$ are precisely the sets $x\mathbb{F}_{q^2}$, $x \in W\setminus \{0\}$. Now any two cyclic shifts $z_1\mathbb{F}_{q^2}$ and  $z_2\mathbb{F}_{q^2}$ of $\mathbb{F}_{q^2}$ are equal if and only if  $z_1= z_2 x$ for some $x \in \mathbb{F}_{q^2}^*$. As there are $q^{2r}-1$ non-zero elements in $W$ and $q^{2}-1$ non-zero elements in $\mathbb{F}_{q^2}$, we conclude that the total number of cyclic shifts of $\mathbb{F}_{q^2}$ contained in $W$ is $\frac{q^{2r}-1}{q^2-1}$.  Hence the result. 
	%$w=w_1x_1+w_2x_2+\cdots +w_rx_r~\mbox{for some}~ x_1,x_2,\ldots,x_r\in \mathbb{F}_{q^2}$. Thus $w\mathbb{F}_{q^2}=(w_1x_1+w_2x_2+\cdots +w_rx_r)\mathbb{F}_{q^2}\subseteq w_1\mathbb{F}_{q^2}\oplus w_2\mathbb{F}_{q^2}\oplus\ldots\oplus w_r\mathbb{F}_{q^2}=W$.
\end{proof}

\begin{theorem}\label{thm3.12}
	Let $n$ be an even number. Let $U$ be a subspace in $\mathcal{G}_q(n,k)$ such that $U$ generates a full-length orbit. Let $\alpha\in \mathbb{F}_q^n\backslash \mathbb{F}_q$ and let the degree of $\alpha$ over $\mathbb{F}_q$ be $2$. Let the number of distinct cyclic shifts of $\mathbb{F}_{q^2}$ in $U$ be $\frac{q^{2m}-1}{q^2-1}$, $m \ge 0$. Then $\dim_{\mathbb{F}_q}(U\cap \alpha U)=2m$. 	
\end{theorem}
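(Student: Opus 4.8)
The plan is to show that $U\cap\alpha U$ coincides exactly with the subspace $W$ generated by all cyclic shifts of $\mathbb{F}_{q^2}$ contained in $U$, and then to read off its dimension from Lemma \ref{lemma7}. First I would record the basic structural facts. Since $n$ is even and $\alpha$ has degree $2$ over $\mathbb{F}_q$, Lemma \ref{lemma3.8} gives $\alpha\in\mathbb{F}_{q^2}\backslash\mathbb{F}_q$, so that $\mathbb{F}_q(\alpha)=\mathbb{F}_{q^2}$ and in particular $\alpha\in\mathbb{F}_{q^2}^*$. Write $V=U\cap\alpha U$ and let $W$ be the subspace generated by all cyclic shifts of $\mathbb{F}_{q^2}$ in $U$. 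Because the number of such shifts is $\frac{q^{2m}-1}{q^2-1}$, Lemma \ref{lemma7} forces the integer $r$ appearing there to equal $m$, so $W=u_1\mathbb{F}_{q^2}\oplus\cdots\oplus u_m\mathbb{F}_{q^2}$ is an $\mathbb{F}_{q^2}$-subspace with $\dim_{\mathbb{F}_q}(W)=2m$.

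For the inclusion $W\subseteq V$, I would use that $W$ is an $\mathbb{F}_{q^2}$-space: since $\alpha\in\mathbb{F}_{q^2}^*$ we have $\alpha W=W$, and because $W\subseteq U$ this yields $W=\alpha W\subseteq\alpha U$. Combining with $W\subseteq U$ gives $W\subseteq U\cap\alpha U=V$.

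The reverse inclusion $V\subseteq W$ is where the real content lies, and here Theorem \ref{thm3.9} is the engine. If $V=\{0\}$ the inclusion is trivial, and this already settles the case $m=0$, since a nonzero $V$ would (by the argument below) produce a cyclic shift of $\mathbb{F}_{q^2}$ inside $U$, contradicting $m=0$. If $V\neq\{0\}$, then, noting that $U$ generates a full-length orbit so that $\mbox{Stab}(U)=\mathbb{F}_q^*$ and hence $\alpha\notin\mbox{Stab}(U)$, Theorem \ref{thm3.9} applies and gives $\mathbb{F}_{q^2}^*\subseteq\mbox{Stab}(V)$. Thus $V$ is itself an $\mathbb{F}_{q^2}$-subspace, so it decomposes as a direct sum of cyclic shifts $v_i\mathbb{F}_{q^2}$; each such shift lies in $V\subseteq U$ and therefore, by the defining property of $W$, lies in $W$. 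Hence $V\subseteq W$.

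Putting the two inclusions together gives $V=W$, and therefore $\dim_{\mathbb{F}_q}(U\cap\alpha U)=\dim_{\mathbb{F}_q}(W)=2m$. I expect the main obstacle to be the clean identification of $V$ as an $\mathbb{F}_{q^2}$-space via Theorem \ref{thm3.9}; once that is in hand, both inclusions follow from the elementary observation that an $\mathbb{F}_{q^2}$-subspace contained in $U$ is automatically spanned by cyclic shifts of $\mathbb{F}_{q^2}$ lying in $U$, hence is contained in $W$.
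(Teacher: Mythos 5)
Your proposal is correct and follows essentially the same route as the paper's proof: both use Theorem \ref{thm3.9} to identify $V=U\cap\alpha U$ as an $\mathbb{F}_{q^2}$-space, Lemma \ref{lemma3.8} to place $\alpha$ in $\mathbb{F}_{q^2}$, and Lemma \ref{lemma7} to extract the dimension. The only cosmetic difference is that you prove the set equality $V=W$ directly, whereas the paper equates the counts of cyclic shifts of $\mathbb{F}_{q^2}$ in $V$ and in $U$ and solves $\frac{q^{2s}-1}{q^2-1}=\frac{q^{2m}-1}{q^2-1}$ for $s=m$.
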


\begin{proof}
	Let $V=U\cap \alpha U$.  By Theorem \ref{thm3.9}, $\mathbb{F}_{q^2}\subseteq \mbox{Stab}(V)\cup\{0\}$. Now, $V$ is a vector space over $\mbox{Stab}(V)\cup\{0\}$. Then $V$ is also a vector space over $\mathbb{F}_{q^2}$. Let $\{v_1, v_2, \ldots, v_s\}$ be a basis of $V$ over $\mathbb{F}_{q^2}$, where we take $s=0$ when the basis is empty, i.e., $V=\{0\}$. Then $V$ can be written as 
	%\[V=\sum_{v\in V}v\mathbb{F}_{q^2}~. \]
%	By the same argument used in Lemma\ref{lemma7}, we can write $V$ as the direct sum of the cyclic shifts of $\mathbb{F}_{q^2}$ contained in $V$. Let 
	\[V=v_1\mathbb{F}_{q^2}\oplus v_2\mathbb{F}_{q^2}\oplus\cdots\oplus v_s\mathbb{F}_{q^2}~.\]
Clearly, $\dim(V)=2s$.

	Now suppose that $U$ contains $\frac{q^{2m}-1}{q^2-1}$  distinct cyclic shifts of $\mathbb{F}_{q^2}$. Clearly, every cyclic shift of $\mathbb{F}_{q^2}$ contained in $V$ is contained in $U$. We claim that a cyclic shift of $\mathbb{F}_{q^2}$ contained in $U$ is also contained in $V$.  Let $u_1\mathbb{F}_{q^2}\subseteq U$, where $u_1\in U$. Then $\alpha u_1\mathbb{F}_{q^2}\subseteq \alpha U$. As $\mbox{deg}_{\mathbb{F}_q}(\alpha)=2$, by Lemma \ref{lemma3.8}, $\alpha \in \mathbb{F}_{q^2}$. So, we get $\alpha u_1\mathbb{F}_{q^2}= u_1\mathbb{F}_{q^2}$. From this we get $u_1 \mathbb{F}_{q^2}\subseteq U\cap \alpha U=V$. Thus all the cyclic shifts of $\mathbb{F}_{q^2}$ that are contained in $U$ are also contained in $V$. Hence $V$ contains exactly $\frac{q^{2m}-1}{q^2-1}$  distinct cyclic shifts of $\mathbb{F}_{q^2}$. By Lemma \ref{lemma7}, $V$ contains exactly $\frac{q^{2s}-1}{q^2-1}$ cyclic shifts of $\mathbb{F}_{q^2}$. Therefore $s=m$, and hence $\dim(V)=2m$. 
\end{proof}

\begin{theorem}\label{thm3.13}
	Let $n$ be an even number.  Let $U$ be a subspace in $\mathcal{G}_q(n,k)$ such that $U$ generates a full-length orbit. Let the number of distinct cyclic shifts of $\mathbb{F}_{q^2}$ in $U$ be $\frac{q^{2m}-1}{q^2-1}$, $m \ge 0$. Then for $0\leq i\leq k-1$, 
\[\lambda_i= \left\{\begin{array}{cc}
q+r q(q+1) & \mbox{if}~ i=2m~,  \\
s_iq(q+1)  & \mbox{if}~ i\ne 2m~,
\end{array}
\right.
\]
where $r, s_i$ are non-negative integers and $\lambda_i$ denotes the number of subspaces $\alpha U$ in Orb$(U)$ such that $\dim(U\cap \alpha U)=i$.	
\end{theorem}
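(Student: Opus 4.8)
The plan is to partition each set $\mathcal{O}_i(U)$ into blocks of the form $\mathcal{S}_{\beta,U}$ and to count using the cardinalities already pinned down in Lemmas \ref{lemma3.5} and \ref{lemma3.7}. The entire argument turns on the dichotomy between elements of degree $2$ over $\mathbb{F}_q$ (which, by Lemma \ref{lemma3.8}, are exactly $\mathbb{F}_{q^2}\setminus\mathbb{F}_q$) and all other elements of $\mathbb{F}_{q^n}\setminus\mathbb{F}_q$. First I would fix an arbitrary $\gamma\in\mathbb{F}_{q^2}\setminus\mathbb{F}_q$. By Remark \ref{remark3.1} every cyclic shift $\alpha U$ with $\alpha$ of degree $2$ lies in $\mathcal{S}_{\gamma,U}$, and conversely every member of $\mathcal{S}_{\gamma,U}=U_\gamma$ is $(\gamma+\delta)U$ with $\gamma+\delta$ of degree $2$; hence $\mathcal{S}_{\gamma,U}$ is precisely the set of degree-$2$ shifts. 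By Lemma \ref{lemma3.7} this set has cardinality $q$, and by Theorem \ref{thm3.12} each of its members satisfies $\dim(U\cap\alpha U)=2m$, so $\mathcal{S}_{\gamma,U}\subseteq\mathcal{O}_{2m}(U)$.

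Next I would record the complementary fact for non-degree-$2$ shifts. If $\beta\in\mathbb{F}_{q^n}\setminus\mathbb{F}_q$ has degree $\neq 2$, then by Lemma \ref{lemma2.1} together with Lemma \ref{lemma3.1} every member of $\mathcal{S}_{\beta,U}$ meets $U$ in the same dimension $j:=\dim(U\cap\beta U)$, so $\mathcal{S}_{\beta,U}\subseteq\mathcal{O}_j(U)$; by Lemma \ref{lemma3.5} its cardinality is $q(q+1)$; and since $|\mathcal{S}_{\beta,U}|=q(q+1)\neq q=|\mathcal{S}_{\gamma,U}|$, Lemma \ref{lemma3.4} forces $\mathcal{S}_{\beta,U}\cap\mathcal{S}_{\gamma,U}=\emptyset$.

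With these two observations the count is immediate. For $i\neq 2m$, Theorem \ref{thm3.12} shows that no degree-$2$ shift can lie in $\mathcal{O}_i(U)$, since such a shift always yields intersection dimension $2m$; hence every $\alpha U\in\mathcal{O}_i(U)$ has $\alpha$ of degree $\neq 2$, and using Lemma \ref{lemma3.4} to glue the blocks $\mathcal{S}_{\alpha,U}$ (pairwise identical or disjoint, each of size $q(q+1)$) into a partition of $\mathcal{O}_i(U)$ gives $\lambda_i=s_i\,q(q+1)$. For $i=2m$, I would split $\mathcal{O}_{2m}(U)=\mathcal{S}_{\gamma,U}\sqcup\big(\mathcal{O}_{2m}(U)\setminus\mathcal{S}_{\gamma,U}\big)$; the first part has size $q$, while every element of the second part is some $\eta U$ with $\eta$ of degree $\neq 2$ (otherwise it would lie in $\mathcal{S}_{\gamma,U}$ by Remark \ref{remark3.1}), so the second part is again a disjoint union of blocks $\mathcal{S}_{\eta,U}$ of size $q(q+1)$, yielding $\lambda_{2m}=q+r\,q(q+1)$.

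The one genuinely delicate point — the main obstacle — is the disjointness between the exceptional block $\mathcal{S}_{\gamma,U}$ of degree-$2$ shifts and the ordinary blocks: this does not follow from Lemma \ref{lemma3.4} alone, which only asserts that two such sets are identical or disjoint, but from combining it with the cardinality mismatch $q\neq q(q+1)$, which excludes the ``identical'' alternative. I would also verify representative-independence at each step, namely that a subspace lying outside $\mathcal{S}_{\gamma,U}$ cannot be written as $\alpha U$ with $\alpha$ of degree $2$; this again reduces to Remark \ref{remark3.1}. Finally, I would note that $2m\le k-1$ holds (so the case $i=2m$ is genuinely in range), since $2m=k$ would make $U$ a vector space over $\mathbb{F}_{q^2}$ and contradict the full-length orbit hypothesis $\mbox{Stab}(U)=\mathbb{F}_q^*$.
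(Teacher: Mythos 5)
Your proposal is correct and follows essentially the same route as the paper's own proof: the partition of each $\mathcal{O}_i(U)$ into blocks $\mathcal{S}_{\beta,U}$, the use of Theorem \ref{thm3.12} to place all degree-$2$ shifts in $\mathcal{O}_{2m}(U)$, and the cardinality dichotomy $q$ versus $q(q+1)$ from Lemmas \ref{lemma3.7} and \ref{lemma3.5} combined with Lemma \ref{lemma3.4}. Your explicit attention to the representative-independence of the degree condition and to the bound $2m\le k-1$ only makes the same argument slightly more careful than the paper's version.
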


\begin{proof}
	We have already proved in Theorem \ref{thm3.11} that for an odd integer $i,~0\leq i\leq k-1,~\lambda_i$ is a multiple of $q(q+1)$. Let $G=\{\alpha: \alpha \in \mathbb{F}_{q^n}~\mbox{and deg}_{\mathbb{F}_q}(\alpha)=2\}$. 
	% By Lemma \ref{lemma3.8}, $G=\beta \mathbb{F}_{q}^*$, where $\beta =$.  It means we will get only one subspace $\beta U $ such that degree of $\beta$ over $\mathbb{F}_q$ is $2$.
	Let $\beta \in G$.  As $U$ contains $\frac{q^{2m}-1}{q^2-1}$ distinct cyclic shifts of $\mathbb{F}_{q^2}$, by Theorem \ref{thm3.12}, $\dim(U\cap \beta U)=2 m$. By Lemma \ref{lemma2.1} and Lemma \ref{lemma3.1}, $\mathcal {S}_{\beta,U}\subseteq \mathcal{O}_{2m}(U)$. The cardinality of $\mathcal{S}_{\beta,U}$ is $q$, by Lemma \ref{lemma3.7}. Now let there exist  $\gamma U\in  \mathcal{O}_{2m}(U)~ \mbox{with}~ \gamma U\notin \mathcal{S}_{\beta, U}$ then the degree of $\gamma$ over $\mathbb{F}_q$ cannot be two, by Lemma \ref{lemma3.8} and Remark \ref{remark3.1}. By Lemma \ref{lemma3.5}, the cardinality of $\mathcal{S}_{\beta,U}$ is $q(q+1)$ and by Lemma \ref{lemma3.4}, $\mathcal{S}_{\gamma,U}\cap \mathcal{S}_{\beta,U}=\emptyset$. Therefore, $\lambda_{2m}=q+rq(q+1)$ for some non-negative integer $r$. 
	
	Let $0\leq i\leq k-1,~i\neq 2m$, and $i$ an even number. Let $\mathcal{O}_i(U)\neq \emptyset$, and let $\alpha U\in \mathcal{O}_i(U)$. As $i\neq 2m$, by Theorem \ref{thm3.12} the degree of $\alpha$ over $\mathbb{F}_q$ is not 2. By Lemma \ref{lemma2.1} and Lemma \ref{lemma3.1}, $\mathcal{S}_{\alpha,U} \subseteq \mathcal{O}_i(U)$.  By Lemma \ref{lemma3.5},  the cardinality of $\mathcal{S}_{\alpha,U}$ is $q(q+1)$. Let there exist $wU\in \mathcal{O}_i(U)$ such that $wU\notin \mathcal{S}_{\alpha,U}$ Then $\mathcal{S}_{w,U}\subseteq \mathcal{O}_i(U)$ and $\lvert \mathcal{S}_{w,U}\rvert= q(q+1)$. Hence $\lambda_i=\lvert\mathcal{O}_i(U)\rvert$ is a multiple of $q(q+1)$, i.e., $\lambda_i = s_iq(q+1)$ for some non-negative integer $s_i$. 
\end{proof}
In the above theorem, if $U$ does not contain a cyclic shift of $\mathbb{F}_q^2$, then $\lambda_0=q+rq(q+1)$, for some non-negative integer $r$, and for $1\leq i\leq k-1,~\lambda_i$ is a multiple of $q(q+1)$. 

\begin{example}
	Let $q=3~\mbox{and}~ n=16$. Let $z$ be a primitive element of $\mathbb{F}_{3^{16}}$. Let $\beta =2z^{15} + 2z^{14} + z^{13} + z^{11} + z^{10} + z^8 + z^7 + z^5 + z^4 + z^3$. The degree of $\beta$ over $\mathbb{F}_3$ is $2$. Then $\beta$ is a primitive element of $\mathbb{F}_{3^2}$. Define $U=\gamma_1\mathbb{F}_{3^2}\oplus \gamma_2\mathbb{F}_{3^2}\oplus \gamma_3 \mathbb{F}_{3^2}\oplus \alpha \mathbb{F}_3$, where $\gamma_1=2z^{15} + z^{10} + 2{z^6} + z^5 + z^4 + z^3 + 2z + 1,~ \gamma_2=z^{14} + 2z^{12} + z^5 + 2z^4 + z^2 + 1,~ \gamma_3=z^5+1~\mbox{and}~ \alpha=z^4+z^3+1$. Then $U$ is a vector space of dimension $7$ over $\mathbb{F}_3$, and $U$ contains $\frac{3^6-1}{3^2-1}$ distinct cyclic shifts of $\mathbb{F}_{3^2}$. As $\mbox{gcd}\big(\dim(U), n\big)=1,~ U$ generates a full-length orbit. By doing computation using magma we get $\lambda_0=20438244,~\lambda_1=1050192,~ \lambda_2=34632,~\lambda_3=288,~\lambda_4=0,~\lambda_5=0~\mbox{and}~\lambda_6=3$. As the number of cyclic shifts of $\mathbb{F}_{q^2}$ contained in $U$ is $\frac{q^6-1}{q^2-1}$, $\lambda_6=3$ and other non-zero $\lambda_i$ is a multiple of $q(q+1)=12$. 
\end{example} 

Now we consider the subspaces of $\mathbb{F}_{q^n}$ which do not generate a full-length orbit. It is known that such a subspace has a stabilizer $\mathbb{F}_{q^t}^*$, where $t~(> 1)$ is a divisor of $\mbox{gcd}(\dim_{\mathbb{F}_q}(U),n)$. The following theorem  generalizes the idea of both Theorems \ref{thm3.7} and \ref{thm3.13} in the present context. 
\begin{theorem}\label{thm3.14}
	Let $U$ be a subspace in $\mathcal{G}_q(n,k)$ such that $U$ does not generate a full-length orbit. Let the stabilizer of $U$ be $\mathbb{F}_{q^t}^*$.  
	\begin{enumerate}
		\item If $\frac{n}{t}$ is an odd number then $\lambda_i$ is a multiple of $q^t(q^t+1)$, $0 \le i \le k-1$.
		\item If $\frac{n}{t}$ is an even number and $U$ contains $\frac{q^{2tm}-1}{q^{2t}-1}~(m\geq0)$ number of distinct cyclic shifts of $\mathbb{F}_{q^{2t}}$ then 
		\[\lambda_{2tm}=q^t+ s q^t(q^t+1)~,\] for some non-negative integer $s$, and for $0\leq i\leq k-1, ~ i\neq 2tm$, and $\lambda_i$ is a multiple of $q^t(q^t+1)$. 
			\end{enumerate}
\end{theorem}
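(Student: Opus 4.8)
The plan is to reduce the degenerate-orbit case to the full-length case already treated in Theorems \ref{thm3.7} and \ref{thm3.13} by passing to the larger ground field $\mathbb{F}_{q^t} = \mbox{Stab}(U)\cup\{0\}$. Write $Q = q^t$ and $N = n/t$; since $t \mid \gcd(\dim_{\mathbb{F}_q}(U),n)$, both $Q$ and $N$ are well defined and $\mathbb{F}_{Q^N} = \mathbb{F}_{q^n}$. By \cite[Lemma 3.3]{glue}, $U$ is a vector space over $\mathbb{F}_{q^t} = \mathbb{F}_Q$, so $U$ is an $\mathbb{F}_Q$-subspace of $\mathbb{F}_{Q^N}$ of dimension $k/t$. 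Crucially, the stabilizer of $U$ under the (unchanged) action of $\mathbb{F}_{q^n}^* = \mathbb{F}_{Q^N}^*$ is exactly $\mathbb{F}_{q^t}^* = \mathbb{F}_Q^*$, which is precisely the condition that $U$ generate a full-length orbit when regarded over $\mathbb{F}_Q$. Hence Theorems \ref{thm3.7} and \ref{thm3.13}, whose proofs use only that the ground field is a finite field, should apply verbatim with $q$ replaced by $Q$ and $n$ replaced by $N$.

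Next I would record the bookkeeping that lets the two counts be identified. For any $\alpha \in \mathbb{F}_{q^n}^*$, commutativity of multiplication gives $\beta(\alpha U) = \alpha U$ if and only if $\beta U = U$, so $\mbox{Stab}(\alpha U) = \mbox{Stab}(U) = \mathbb{F}_Q^*$; thus $\alpha U$ is again an $\mathbb{F}_Q$-subspace, and therefore so is $U \cap \alpha U$. Consequently $\dim_{\mathbb{F}_q}(U \cap \alpha U) = t\,\dim_{\mathbb{F}_Q}(U \cap \alpha U)$. In particular $\lambda_i = 0$ whenever $t \nmid i$, while for $i = tj$ the number $\lambda_{tj}$ of shifts $\alpha U$ with $\mathbb{F}_q$-intersection dimension $tj$ coincides with the number of shifts with $\mathbb{F}_Q$-intersection dimension $j$. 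Since the set $\mbox{Orb}(U)$ and its cardinality $(q^n-1)/(q^t-1) = (Q^N-1)/(Q-1)$ are unchanged by the choice of ground field, this gives an identification (up to the scaling factor $t$ on intersection dimensions) of the intersection distribution computed over $\mathbb{F}_q$ with that computed over $\mathbb{F}_Q$.

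With these identifications the two parts follow by direct appeal to the full-length results over $\mathbb{F}_Q$. If $N = n/t$ is odd, Theorem \ref{thm3.7} over $\mathbb{F}_Q$ shows each $\mathbb{F}_Q$-count is a multiple of $Q(Q+1) = q^t(q^t+1)$; translating back, every $\lambda_{tj}$ is a multiple of $q^t(q^t+1)$ and every remaining $\lambda_i$ vanishes, proving part (1). If $N$ is even, I would observe that a cyclic shift of $\mathbb{F}_{Q^2} = \mathbb{F}_{q^{2t}}$ over $\mathbb{F}_Q$ is literally the same object as a cyclic shift of $\mathbb{F}_{q^{2t}}$, so the hypothesis that $U$ contain $\frac{q^{2tm}-1}{q^{2t}-1} = \frac{Q^{2m}-1}{Q^2-1}$ such shifts is exactly the hypothesis of Theorem \ref{thm3.13} over $\mathbb{F}_Q$. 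That theorem then yields an $\mathbb{F}_Q$-count at intersection dimension $2m$ of the form $Q + r\,Q(Q+1)$ with the remaining $\mathbb{F}_Q$-counts multiples of $Q(Q+1)$; since $\dim_{\mathbb{F}_Q}(U\cap\alpha U)=2m$ corresponds to $\dim_{\mathbb{F}_q}(U\cap\alpha U)=2tm$, this reads $\lambda_{2tm} = q^t + s\,q^t(q^t+1)$ and $\lambda_i$ a multiple of $q^t(q^t+1)$ for $i \neq 2tm$, proving part (2).

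The main obstacle is justifying the base-change step cleanly, rather than any single computation. Two points need care: first, that $\mbox{Stab}(\alpha U) = \mbox{Stab}(U)$, so that $U\cap\alpha U$ really is an $\mathbb{F}_{q^t}$-space and the intersection dimensions scale exactly by $t$ (this is where the degenerate case differs structurally from the full-length case); and second, that the earlier full-length theorems are genuinely field-agnostic, i.e., their proofs invoke only properties of an arbitrary finite ground field and its quadratic extension, so that substituting $(q,n) \mapsto (q^t, n/t)$ is legitimate. Once these are in place, no further calculation is required and both parts follow at once.
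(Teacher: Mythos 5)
Your proposal is correct and follows essentially the same route as the paper: pass to the ground field $\mathbb{F}_{q^t}$, observe that $U$ then generates a full-length orbit in $\mathbb{F}_{(q^t)^{n/t}}$, identify the intersection counts via $\dim_{\mathbb{F}_q}(U\cap\alpha U)=t\,\dim_{\mathbb{F}_{q^t}}(U\cap\alpha U)$, and invoke Theorems \ref{thm3.7} and \ref{thm3.13} with $(q,n)$ replaced by $(q^t,n/t)$. Your explicit check that $\mbox{Stab}(\alpha U)=\mbox{Stab}(U)$, so that the intersections really are $\mathbb{F}_{q^t}$-spaces, is a point the paper states without justification, so your write-up is if anything slightly more careful.
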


\begin{proof}
	As $\mbox{Stab}(U)= \mathbb{F}_{q^t}^*$, $U$ and all its cyclic shifts are vector spaces over the field $\mathbb{F}_{q^t}$ and $t$ divides  $\gcd\big(\dim_{\mathbb{F}_q}(U),n\big)$. For any $\alpha \in \mathbb{F}_{q^n}\backslash \mathbb{F}_{q^t},~ U\cap \alpha U$ is a vector space over the field $\mathbb{F}_{q^t}$.
	
	For $0\leq i\leq k-1$, $\lambda_i$ denotes the number of subspaces $\alpha U$ such that  $\dim_{\mathbb{F}_q}(U\cap \alpha U)=i$. For $0\leq j\leq \frac{k}{t} -1$, let  $\chi_j$ denotes the number of subspaces $\alpha U$ such that $\dim_{\mathbb{F}_{q^t}}(U\cap \alpha U)=j$.   It is easy to check that $\dim_{\mathbb{F}_q}(U\cap\alpha U)= t\dim_{\mathbb{F}_{q^t}}(U\cap \alpha U)$. Thus the dimension of $U\cap \alpha U$ over $\mathbb{F}_q$ is a multiple of $t$.  Hence $\lambda _i=0$ if $t$ does not divide $i$, and for $0\leq j\leq \frac{k}{t}-1,~ \lambda_{jt}=\chi_j$.
	
	Let $n_1= \frac{n}{t}$, and  let $q_1 = q^t$. Then $q^n= q_1^{n_1}$.
	Now $U\subseteq \mathbb{F}_{{q_1}^{n_1}}$ is a vector space over $\mathbb{F}_{q_1}$ and $\dim_{\mathbb{F}_{q_1}}(U)=\frac{k}{t}$. Now $U$ generates a full-length orbit as a vector space over $\mathbb{F}_{q1}$, i.e., $\lvert \mbox{Orb}(U)\rvert = \frac{{q_1}^{n_1}-1}{q_1-1}$.  \\
	\textbf{Case 1}. Let $n_1=\frac{n}{t}$ be an odd number.\\
	By Theorem \ref{thm3.7} , for each $j,~0\leq j\leq \frac{k}{t}-1,~\chi_j$  is a multiple of $q_1(q_1+1)$. Therefore, for each  $j,~ 0\leq j\leq \frac{k}{t}-1,~ \lambda_{jt}$ is a multiple of $q_1(q_1+1)$. As $q_1=q^t$,  $\lambda_{jt}$ is a multiple of $q^t(q^t+1)$.\\ 
	\textbf{Case 2}. Let $n_1=\frac{n}{t}$ be an even number.\\
	By Theorem \ref{thm3.13}, if $U$ contains $\frac{q_1^{2m}-1}{q_1^2-1}$ number of distinct cyclic shift of $\mathbb{F}_{{q_1}^2}$ then $\chi_{2m}= q_1+ sq_1(q_1+1)$, for some non-negative integer $s$. For $0\leq j\leq \frac{k}{t}-1,~ j\neq 2m,~ \chi_j$ is a multiple of $q_1(q_1+1)$. From this we conclude that if $U$ contains $\frac{q^{2tm}-1}{q^{2t}-1}$ cyclic shifts of $\mathbb{F}_{q^{2t}}$ then $\lambda_{2tm}=q^{t}+sq^t(q^t+1)$, and for $0\leq j\leq \frac{k}{t}-1,~j\neq 2m,~ \lambda_{jt}$ is a multiple of $q^t(q^t+1)$. 
\end{proof}

\begin{example}
	Let $q=2$ and $n=14$. Let $z$ be a primitive element of $\mathbb{F}_{2^{14}}$. Then $z^{\frac{2^{14}-1}{2^2-1}}=z^{5461}$ is a primitive element of $\mathbb{F}_{2^2}$. Now, $\mathbb{F}_{2^2}=\{0,1, z^{5461}, z^{10922}\}$. Define $U=z^{11}\mathbb{F}_{2^2}\oplus z^{13} \mathbb{F}_{2^2}\oplus z^{14} \mathbb{F}_{2^2}$. The subspace $U$ is a vector space over $\mathbb{F}_2$ of dimension $6$. Clearly, $\mathbb{F}_{2^2}$ is a stabilizer of $U$ and $\frac{n}{t}= 7$ is an odd number. By doing computation using Magma we get $\lambda_0= 5040~, \lambda_1=0, ~ \lambda_2= 420,~ \lambda_3=\lambda_4=\lambda_5=0$. We can see that $\lambda_i$ is $0$ if $2$ does not divide $i$, and non-zero $\lambda_i$ is a multiple of $2^2 (2^2+1)=20$. 
\end{example}

\begin{example}
	Let $q=3$ and $n=12$. Let $z$ be a primitive element of $\mathbb{F}_{3^{12}}$. Let $a\in \mathbb{F}_{3^4}\backslash \mathbb{F}_{3^2}~\mbox{and}~ \gamma_1,\gamma_2\in \mathbb{F}_{3^{12}}\backslash \mathbb{F}_{3^4}$. Define $U=a\gamma_1\mathbb{F}_{3^2}\oplus a^2\gamma_1\mathbb{F}_{3^2}\oplus\gamma_2\mathbb{F}_{3^2}$. Let $a=2z^{10} + 2z^{9}+ 2z^7 + z^6 + z^4 + 2z^3 + z + 2,~\gamma_1=z^2+1~\mbox{and}~\gamma_2=z^3+z+1$. The subspace $U$ is a vector space over $\mathbb{F}_3$ and $\dim_{\mathbb{F}_3}(U)=6$. The stabilizer of $U$ is $\mathbb{F}_{3^2}~\mbox{and}~ \frac{n}{t}= 6$ is an even number. As $a\in \mathbb{F}_{3^4}\backslash \mathbb{F}_{3^2},~ a\gamma_1\mathbb{F}_{3^2}\oplus a^2\gamma_1\mathbb{F}_{3^2}=\gamma_1\mathbb{F}_{3^4}$. Thus, $U$ contains a cyclic shift of $\mathbb{F}_{3^4}$. By doing computation using Magma we get $\lambda_0=58320,~\lambda_1=0,~\lambda_2=8100~, \lambda_3=0,~ \lambda_4=9~ \mbox{and}~\lambda_5=0$. We can see that $\lambda_i=0$ if $t=2$ does not divide $i$. As $U$ contains $1$ cyclic shift of $\mathbb{F}_{3^4}$, $\lambda_4= 9$ and other non- zero $\lambda_i$ is a multiple of $q^2(q^2+1)=90$. 
	
\end{example}

\section{Conclusion}
In this paper, we have obtained some results on the intersection distribution and hence the distance distribution of single-orbit cyclic subspace codes.  
For a subspace $U$ in $\mathcal{G}_q(n,k)$ with $\mbox{Stab}(U)=\mathbb{F}_{q^t}^*,~t\geq 1$, and the distance distribution $(\delta_{2k-2l}, \delta_{2k-2(l-1)},\ldots, \delta_{2k})$, we have proved that if  $\frac{n}{t}$ is an odd number, then for  $0\leq i\leq l$, 
\[\delta_{2k-2i}= r_iq^t(q^t+1)~,\] 
where $r_i$ is some non-negative integer.
Further, if $\frac{n}{t}$ is an even number and $U$ contains $ \frac{q^{2tm}-1}{q^{2t}-1}$ number of distinct cyclic shifts of $\mathbb{F}_{q^{2t}}$, then for $0\leq i\leq l,~i\neq 2tm$,
\[\delta_{2k-2i}=s_iq^t(q^t+1)~,\] 
and
\[\delta_{2k-2(2tm)}=q^t+s_{2tm}q^t(q^t+1)~,\] 
where $s_i$ is some non-negative integer.

This area requires more investigation. It is not known what values the integers $r_i$ and $s_i$ can take. Further, the distance distribution of cyclic subspace codes with multiple orbits can be studied.

\bmhead{Data Availability}
No data was used for the research described in this article.

\bmhead{Conflicts of interest}
The authors have no conflicts of interest to declare that are relevant to the content of this article.

\bmhead{Acknowledgments}
The first author would like to thank Ministry of Human Resource Development (MHRD), India for providing financial support.
\bibliography{sn-bib}

% common bib file
%% if required, the content of .bbl file can be included here once bbl is generated
%\input sn-bib.bbl

\end{document}